\newtheorem{theorem}{Theorem}
\newtheorem{corollary}[theorem]{Corollary}
\newtheorem{definition}[theorem]{Definition}
\newtheorem{lemma}[theorem]{Lemma}
\newenvironment{proof}[1][Proof]{\noindent\textbf{#1.} }{\hfill $\Box$\\[2mm]}
\newenvironment{proofsketch}[1][Proof sketch]{\noindent\textbf{#1.} }{\hfill $\Box$\\[2mm]}
\newcounter{linenumber}
\newcounter{excount}
\def\S{\ensuremath{\mathcal{S}}}
\def\N{\ensuremath{\mathbb{N}}}
\def\Nat{\ensuremath{\mathbb{N}}}
\newcommand{\LS}{LS}
\newcommand{\true}{\lit{true}}
\newcommand{\false}{\lit{false}}
\newcommand{\remove}[1]{}
\newcommand{\id}[1]{\mbox{\textit{#1}}}
\newcommand{\petr}[1]{{\bf (((PK: #1)))}}
\newcommand{\LL}{\ms{LL}}
\newcommand{\ignore}[1]{}
\begin{document}

\title{Optimism for Boosting Concurrency}
\author{
Vincent Gramoli\,$^1$~~~Petr Kuznetsov\,$^{2}$~~~Srivatsan Ravi\,$^3$\thanks{Contact author:
          srivatsan@srivatsan.in, FG INET, MAR 4-4, Marchstr. 23,
          10587 Berlin, Germany}\\
\small $^1$ NICTA and University of Sydney\\
\small $^2$ T\'el\'ecom ParisTech\\
\small $^3$ TU Berlin/ Purdue University
}

\date{}\maketitle
\thispagestyle{empty}

\begin{abstract}
Modern concurrent programming benefits from a large variety of synchronization techniques.
These include conventional pessimistic locking, as well as optimistic
techniques based on conditional synchronization primitives or 
transactional memory. Yet, it is unclear which of these approaches 
better leverage the concurrency inherent to multi-cores.

In this paper, we compare the level of concurrency one can obtain by converting a 
sequential program into a concurrent one using optimistic or
pessimistic techniques. 
To establish fair comparison of such implementations, we 
introduce a new correctness
criterion for concurrent programs, defined independently 
of the synchronization techniques they use. 

We treat a program's concurrency as its ability to accept a
concurrent schedule, a metric inspired by the theories of both databases
and transactional memory. We show that pessimistic locking can
provide strictly higher concurrency than transactions for some
applications whereas transactions can provide strictly higher
concurrency than pessimistic locks for others. 
Finally, we show that combining the benefits of the two synchronization techniques 
can provide  strictly more concurrency than any of them individually.
We propose a list-based set algorithm that is optimal in the sense that it accepts all
correct concurrent schedules.
As we show via experimentation, the optimality in terms of concurrency is
reflected by scalability gains.      
\end{abstract}
%

\newpage
\pagenumbering{arabic}\setcounter{page}{1}
%
\section{Introduction}
\label{sec:intro}
To exploit concurrency provided by modern multi-cores, conventional lock-based synchronization 
pessimistically protects accesses to  the shared memory before
executing them.
Speculative synchronization, achieved using transactional memory (TM) or
conditional primitives, such as CAS or LL/SC,  
optimistically executes memory operations with a risk of aborting them in the future.
A programmer typically uses these synchronization techniques 
as ``wrappers'' to allow every process (or thread) to \emph{locally} run its sequential code while ensuring 
that the resulting concurrent execution is \emph{globally} correct.

Unfortunately, it is difficult for programmers to tell in advance 
which of the 
techniques
will establish more concurrency in their resulting programs.
By speculatively executing concurrent accesses that would have to
block in a lock-based implementation,
TMs~\cite{HM93,ST95,HLR10} seemingly provide high
concurrency.   
However, TMs conventionally ensure 
\emph{serializability}~\cite{Pap79-serial} or even
stronger properties~\cite{tm-book}, which may prohibit concurrent
scenarios allowed by the sequential specification of the specific data
structure we intend to implement~\cite{GG14}.

\ignore{
Consider the concurrent scenario in Figure~\ref{fig:locksvstm}. 
Here $R(X)$ (resp. $W(X)$) represents a read (resp. write) on some shared 
object $X$.
Suppose that the first thread reads $Y$ and overwrites $X$ and
expects both the read value of $Y$ and the overwritten value of
$X$ 
to
be in the memory at the same time. Similarly, the second thread requires the value
it writes to $Y$  and the value it reads from $X$ to be in the memory at the same time. 
This example is intentionally oversimplified, nevertheless 
it can be useful when a pointer to $X$ is stored in $Y$.
The threads can progress concurrently meeting their expectations 
if they acquire distinct exclusive locks (one on $X$ and one on $Y$) 
but not if they run transactions: 
the underlying transactional memory would typically enforce serializability 
preventing one of the transactions from committing.
This simple example  illustrates why transactional memory may not be
the best candidate
to maximize concurrency, however, 
it does not say whether optimism is badly suited.

\begin{figure}[b]
  \setlength\tabcolsep{5pt}
    \footnotesize
    \begin{minipage}[b]{0.4\columnwidth}
      \begin{tabular}{l|l}
	Thread 1 & Thread 2 \\ \hline
	{\bf lock(X)} & {\bf lock(Y)}  \\
	$R(Y)$ & \\
	& $W(Y)$ \\
	& $R(X)$ \\
	$W(X)$ &  \\
	{\bf unlock(X)} & {\bf unlock(Y)}  \\
      \end{tabular}
    \end{minipage}
    \hspace{4em}
    \begin{minipage}[b]{0.4\columnwidth}
      \begin{tabular}{l|l}
	Thread 1 & Thread 2 \\ \hline
	{\bf txn} \{ & {\bf txn} \{ \\
	~~~~$R(Y)$ &  \\
	 & ~~~~$W(Y)$ \\
	& ~~~~$R(X)$ \\
	~~~~$W(X)$ &  \\
        {\bf \}}  &      {\bf \}}\\
      \end{tabular}
    \end{minipage}
  \caption{Intuitively, locks may provide more concurrency than
    transactions: the schedule is not serializable and thus is
    rejected by most TMs}
\label{fig:locksvstm}
\end{figure}
}

In this paper, we analyze the ``amount of concurrency'' one can obtain by turning a sequential program into a concurrent one.
In particular, we compare the use of 
optimistic and pessimistic
synchronization techniques,  whose popular examples are transactions and locking, respectively.
To fairly compare concurrency provided by implementations
based on various techniques,    
one has (1)~to  define what it means for a concurrent program to be
correct regardless of the type of  synchronization it uses and 
(2)~to define a metric of concurrency. 

\vspace{1mm}\noindent\textbf{Correctness.}
We begin by defining a novel consistency criterion,
namely \emph{locally-serializable linearizability}.
We say that a concurrent implementation of a given sequential data type is
\emph{locally serializable} if it
ensures that the local execution of each 
operation 
is equivalent to \emph{some} execution of its sequential implementation.
This condition is weaker than serializability
since it does not require that there exists a \emph{single} sequential 
execution  that is consistent with all local executions.
It is however sufficient to guarantee that optimistic
executions do not observe an inconsistent transient state that could 
lead, for example, to a fatal error like division-by-zero.

Furthermore, the implementation should ``make sense'' globally, 
given the \emph{sequential type} of the data structure we implement.
The high-level history of every execution 
of a concurrent implementation must be 
\emph{linearizable}~\cite{HW90,AW04} with respect to 
this sequential type.
The combination of local serializability and linearizability gives
a correctness criterion that we call \emph{\LS-linearizability},
where {\LS} stands for ``locally serializable''.
We show that LS-linearizability is, as the original  linearizability,
compositional~\cite{HW90,HS08-book}: a composition of LS-linearizable 
implementations is also LS-linearizable. 
Unlike linearizability, however, it is not non-nonblocking: local
serializability may prevent an operation in a finite LS-linearizable
history from completing in a non-blocking manner. 

We apply the criterion of LS-linearizability to  
two broad classes of \emph{pessimistic} and \emph{optimistic}
synchronization techniques. 
Pessimistic implementations capture what can be achieved 
using classic conservative locks like mutexes, 
spinlocks, reader-writer locks.
In contrast, optimistic implementations proceed speculatively and
may roll back in the case of conflicts, e.g.,  
relying on classical 
TMs, like TinySTM~\cite{FFR08} or NOrec~\cite{norec},  or more relaxed forms of
optimistic techniques, such as ``lazy''
synchronization~\cite{HHL+05},  elastic transactions~\cite{FGG09} or
view transactions~\cite{AMT10}.

\vspace{1mm}\noindent\textbf{Concurrency metric.}
We measure the amount of concurrency provided by an LS-linearizable implementation as the set of schedules it accepts.
To this end, we define a concurrency metric 
inspired by the analysis of parallelism in database concurrency control~\cite{Yan84,Her90}
and transactional memory~\cite{GHF10}.
More specifically, we assume an external scheduler that defines which
processes execute which steps of the corresponding sequential program 
in a dynamic and unpredictable fashion. 
This allows us to define concurrency provided by an implementation as the set of \emph{schedules} 
(interleavings of steps of concurrent sequential operations) 
it \emph{accepts} (is able to effectively process).
%

Our concurrency metric is platform-independent and it allows for
measuring relative concurrency of LS-linearizable implementations
using arbitrary synchronization techniques.   
We do not claim that this metric necessarily captures 
efficiency, as it does not account for other factors, 
like cache sizes, cache coherence protocols, or computational costs of 
validating a schedule, which may also affect performance on
multi-core architectures.
However, our experimental evaluations show that the gain in concurrency 
may translate into better scalability.

\vspace{1mm}\noindent\textbf{Measuring concurrency.}
This paper provides 
a framework to compare the concurrency one can get
by choosing a particular synchronization technique for a specific data type.
For the first time, we analytically capture the
inherent incomparability of TM-based and pessimism-based
implementations in exploiting concurrency.
We illustrate this using a popular sequential list-based set
implementation~\cite{HS08-book}, concurrent implementations of which
are our running examples.
More precisely, we show that there exist TM-based implementations that, for some workloads, 
allow for more concurrency than \emph{any} pessimistic implementation,
but we also show that there exist pessimistic implementations that, for other workloads, allow for more 
concurrency than \emph{any} TM-based implementation.

Intuitively, an implementation based on transactions 
may abort an operation based on the way
concurrent steps are scheduled, 
while a pessimistic implementation 
has to proceed eagerly without knowing about how future steps will be 
scheduled, sometimes over-conservatively rejecting 
a potentially acceptable schedule.  
By contrast, pessimistic implementations designed to exploit
the semantics of the data type can supersede the
``semantics-oblivious'' TM-based implementations.

More surprisingly, we demonstrate that combining the benefit of pessimistic implementations, 
namely their semantics awareness, 
and the benefit 
of transactions,
namely their optimism, 
enables implementations that are strictly 
better-suited for exploiting concurrency 
than any of them individually.
We describe a generic optimistic implementation of 
the list-based set that is \emph{optimal} with respect to our
concurrency metric: we show that, essentially, it accepts \emph{all} 
correct concurrent schedules.
Our implementation, designed with our theoretical concurrency metric
in mind,  is surprisingly reminiscent of the
state-of-the-art ``pragmatic'' list-based set implementations~\cite{HHL+05,harris-set}. 
Indeed, our experimental results confirm
that optimal concurrency leads to higher performance
than popular pessimistic algorithms, like hand-over-hand list-based
sets, or generic TM-based optimistic ones.

Our concurrency analysis is focused on a specific example of a
list-based set, but our findings demonstrate the potential
of the concurrency-based  approach in analyzing and comparing wider classes of
LS-linearizable data structures. 

\section{Preliminaries}
\label{sec:prel}
%
\vspace{1mm}\noindent\textbf{Sequential types and implementations.}
%
An \emph{object type} $\tau$ is a tuple
$(\Phi,\Gamma, Q, q_0, \delta)$ where
$\Phi$ is a set of operations,
$\Gamma$ is a set of responses, $Q$ is a set of states, $q_0\in Q$ is an
initial state and 
$\delta \subseteq Q\times \Phi \times Q\times \Gamma$ 
is a transition relation that determines, for each state
and each operation, the set of possible
resulting states and produced responses~\cite{AFHHT07}. 
For any type $\tau$, each high-level object $O_{\tau}$ of this type has a \emph{sequential implementation}. 
For each operation $\pi \in \Phi$, 
$\id{IS}$ specifies a deterministic procedure that 
performs \emph{reads} and \emph{writes} on a collection of objects
$X_1,\ldots , X_m$ that encode a state of $O_{\tau}$, and returns a response $r\in \Gamma$. 

As a running example, we consider the sorted linked-list based implementation of the type \emph{set}, commonly referred to 
as the \emph{list-based set}~\cite{HS08-book}.
The \emph{set} type exports operations $\lit{insert}(v)$, $\lit{remove}(v)$ and
$\lit{contains}(v)$, with $v\in\mathbb{Z}$.
We consider a sequential implementation $\LL$
of the \emph{set} type using a sorted linked list where 
each element (or \emph{object}) stores an integer value, $\ms{val}$, and a pointer to its successor, $\ms{next}$, so that elements are 
sorted in the ascending order of their value.  
Both element fields are accessed atomically.
Every operation invoked with a parameter $v$ traverses the list starting from the
$\ms{head}$ up to the element storing value $v'\geq v$.
If $v'=v$, then $\lit{contains}(v)$ returns $\lit{true}$, $\lit{remove}(v)$ unlinks the 
corresponding element and returns $\lit{true}$, and $\lit{insert}(v)$ returns $\lit{false}$. Otherwise, 
$\lit{contains}(v)$ and $\lit{remove}(v)$ return
$\lit{false}$ while $\lit{insert}(v)$ adds a new element with value
$v$ to the list and returns $\lit{true}$. 
The list-based set 
is denoted by $(\LL,\ms{set})$ (cf. formal definition in Appendix~\ref{app:seq}).

\vspace{1mm}\noindent\textbf{Concurrent implementations.}
We tackle the problem of turning the sequential
implementation $\id{IS}$ of type $\tau$ into a \emph{concurrent} one, shared by 
$n$ \emph{processes} $p_1,\ldots,p_n$ ($n\in\Nat$).
The implementation provides the processes with algorithms 
for the reads and writes on objects.
We refer to the resulting implementation as a concurrent implementation of $(\id{IS},\tau)$.
We assume an asynchronous shared-memory system in which the processes communicate by
applying primitives on shared \emph{base objects}~\cite{Her91}.
We place no upper bounds on the number of versions an object may maintain or on the size of this object.
Throughout this paper, the term \emph{operation} refers to some
high-level operation of the type, 
while read-write operations on objects are referred simply 
as \emph{reads} and \emph{writes}.

An implemented read or write may \emph{abort} by returning a special response
$\bot$. In this case we say that the corresponding high-level
operation is \emph{aborted}. 
The $\bot$ event is treated both as the response event of the read or
write operation and as the response of the corresponding high-level operation.   

\vspace{1mm}\noindent\textbf{Executions and histories.}
An \emph{execution} of a concurrent implementation is a sequence
of invocations and responses of high-level operations of type $\tau$, 
invocations and responses of read and write
operations, and invocations and responses of base-object primitives.
We assume that executions are \emph{well-formed}:
no process invokes a new read or write, or high-level operation before
the previous read or write, or a high-level operation, resp., 
returns, or takes steps outside its read or write operation's interval.

Let $\alpha|p_i$ denote the subsequence of an execution $\alpha$
restricted to the events of process $p_i$.
Executions $\alpha$ and $\alpha'$ are \emph{equivalent} if for every process
$p_i$, $\alpha|p_i=\alpha'|p_i$.
An operation $\pi$ \emph{precedes} another operation $\pi'$ in an execution
$\alpha$, 
denoted $\pi \rightarrow_{\alpha} \pi'$, 
if the response of $\pi$ occurs before the invocation of $\pi'$.
Two operations are \emph{concurrent} if neither precedes
the other. 
An execution is \emph{sequential} if it has no concurrent 
operations. 
A sequential execution $\alpha$ is \emph{legal} 
if for every object $X$, every read of $X$ in $\alpha$ 
returns the latest written value of $X$.
An operation is \emph{complete} in $\alpha$ if the invocation event is
followed by a \emph{matching} (non-$\bot$) response or aborted; otherwise, it is \emph{incomplete} in $\alpha$.
Execution $\alpha$ is \emph{complete} if every operation is complete in $\alpha$.

The \emph{history exported by an execution $\alpha$} is
the subsequence of $\alpha$ reduced to the invocations and responses
of operations,  reads and writes, except for the reads
and writes that return $\bot$. 

\vspace{1mm}\noindent\textbf{High-level histories and linearizability.}
%
A \emph{high-level history} $\tilde H$ of an execution $\alpha$ is the subsequence of $\alpha$ consisting of all
invocations and responses of \emph{non-aborted} operations.
A complete high-level history $\tilde H$ is \emph{linearizable} with 
respect to an object type $\tau$ if there exists
a sequential high-level history $S$ equivalent to $H$ such that
(1) $\rightarrow_{\tilde H}\subseteq \rightarrow_S$ and
(2) $S$ 
is consistent with the sequential specification of type $\tau$.
Now a high-level history $\tilde H$ is linearizable if it can be
\emph{completed} (by adding matching responses to a subset of
incomplete operations in $\tilde H$ and removing the rest)
to a linearizable high-level history~\cite{HW90,AW04}.

\vspace{1mm}\noindent\textbf{Obedient implementations.}
We only consider implementations that satisfy the following condition:
Let $\alpha$ be any complete sequential execution of a concurrent implementation $I$.
Then in every execution of $I$ of the form $\alpha\cdot\rho_1\cdots \rho_k$
where each $\rho_i$ ($i=1,\ldots,k$) is the complete execution of a
read, every read returns the value written by the last write that does
not belong to an aborted operation.

Intuitively, this assumption restricts our scope to
``obedient'' implementations of reads and writes, where no
read value may depend on some future write.   
In particular, we filter out implementations in which the
complete execution of a high-level operation is performed within the
first read or write of its sequential algorithm.

\vspace{1mm}\noindent\textbf{Pessimistic implementations.}
Informally, a concurrent implementation is \emph{pessimistic} if the exported history contains
every read-write event that appears in the execution. 
More precisely, no execution of a pessimistic implementation includes
operations that returned $\bot$.  

For example, a class of pessimistic implementations are those based on \emph{locks}.
A lock provides
shared or exclusive access to an object $X$ through 
synchronization primitives $\lit{lock}^S(X)$ (\emph{shared mode}),
$\lit{lock}(X)$ (\emph{exclusive mode}),  
and $\lit{unlock}(X)$.
When $\lit{lock}^S(X)$ (resp. $\lit{lock}(X)$) invoked
by a process $p_i$ returns, we say that $p_i$ \emph{holds
a lock on $X$ in shared (resp. exclusive) mode}.
A process \emph{releases} the object it holds by invoking
$\lit{unlock}(X)$.  
If no process holds a shared or exclusive
lock on $X$, then $\lit{lock}(X)$
eventually returns;
if no process holds an exclusive
lock on $X$, then $\lit{lock}^S(X)$
eventually returns; and
if no process holds a
lock on $X$ forever, then every $\lit{lock}(X)$ or $\lit{lock}^S(X)$
eventually returns. 
Given a sequential implementation of a data type, 
a corresponding lock-based concurrent one 
is derived by inserting the synchronization primitives
to provide read-write access to an object. 

\vspace{1mm}\noindent\textbf{Optimistic implementations.}
In contrast with pessimistic ones, optimistic implementations may, under
certain conditions, abort an operation:
some read or write may return $\bot$,
in which case the corresponding operation also returns $\bot$.

Popular classes of optimistic implementations are those based on
``lazy synchronization''~\cite{HHL+05,HS08-book} (with the ability of
returning $\bot$ and re-invoking an operation) or   
 \emph{transactional memory} (\emph{TM})~\cite{HM93,ST95,HLR10}.
A TM provides access to a
collection of objects via \emph{transactions}.
A transaction is a sequence of read and write operations on
objects. A transaction may \emph{commit},
or one of the read or write performed by the transaction may \emph{abort}.
Given a sequential implementation of a data type, 
a corresponding TM-based concurrent one 
puts each sequential operation within a
transaction and replaces each read and write of an object $X$ with the
transactional read and write implementations, respectively. 
If the transaction commits, then the result of the operation is
returned to the user; otherwise if one of the transactional operations aborts, $\bot$ is returned.

\section{Locally serializable linearizability}
\label{sec:lin}
We are now ready to define the correctness criterion that we impose on our
concurrent implementations.

Let $H$ be a history and let $\pi$ be a high-level operation in $H$. 
Then $H|\pi$ denotes the subsequence of $H$ consisting of the events
of $\pi$, except for the last aborted read or write, if any.
Let $\id{IS}$ be a sequential implementation of an object of type
$\tau$ and $\Sigma_{\id{IS}}$, the set of histories of $\id{IS}$. 
\begin{definition}[LS-linearizability]
\label{def:lin}
A history ${H}$ is \emph{locally serializable with respect to}
${\id{IS}}$ if for every high-level operation $\pi$ in $H$,
there exists $S \in \Sigma_{\id{IS}}$ such that $H|\pi=S|\pi$.
A history ${H}$ is \emph{\LS-linearizable with respect to
$(\id{IS},\tau)$}  (we also write $H$ is $(\id{IS},\tau)$-LSL)  if:
(1) ${H}$ is locally serializable with respect to
$\id{IS}$ and (2) the corresponding high-level history $\tilde H$ 
is linearizable with respect to $\tau$.
\end{definition}
Observe that local serializability stipulates that the execution is 
witnessed sequential by every operation.
Two different operations (even when invoked by the same process) are not
required to witness mutually consistent sequential executions.

A concurrent implementation $I$ is \emph{\LS-linearizable with respect to
$(\id{IS},\tau)$} (we also write $I$ is $(\id{IS},\tau)$-LSL)
if every history exported by $I$ is $(\id{IS},\tau)$-LSL.   
Throughout this paper, when we refer to a concurrent implementation of $(\id{IS},\tau)$, 
we assume that it is \LS-linearizable with respect to $(\id{IS},\tau)$.

Just as linearizability, \LS-linearizability is
\emph{compositional}~\cite{HW90,HS08-book}: a composition of LSL 
implementations is also LSL. (cf. Appendix~\ref{app:comp}).
However, it is not \emph{non-nonblocking}: local
serializability may prevent an operation in a finite LSL
history from completing in a non-blocking manner. 

\begin{figure*}[t]
 \includegraphics[scale=0.52]{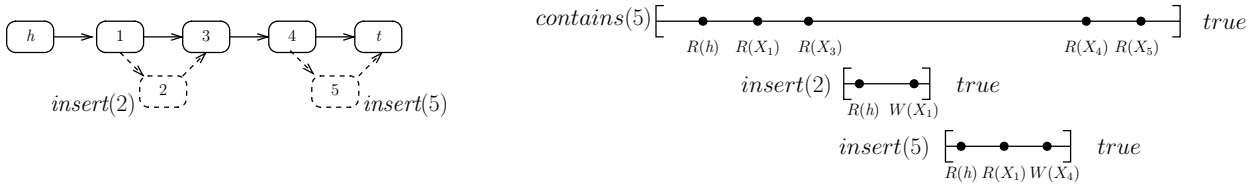}
 \caption{\small{A concurrency scenario for a list-based set, initially $\{1,3,4\}$, where value $i$ is stored at node $X_i$:
   $\lit{insert}(2)$ and  $\lit{insert}(5)$ can proceed
   concurrently with $\lit{contains}(5)$, the history is
   LS-linearizable but not serializable. (We only depict important read-write
   events here.)}}\label{fig:ex1}%
\end{figure*}

\vspace{1mm}\noindent\textbf{LS-linearizability versus other criteria.}
LS-linearizability is a two-level consistency criterion which makes it
suitable to compare concurrent implementations of a sequential data
structure, regardless of synchronization techniques they use.
It is quite distinct from related criteria designed for database and software
transactions, such as serializability~\cite{Pap79-serial,WV02-book} and
multilevel serializability~\cite{Wei86,WV02-book}.

For example, serializability~\cite{Pap79-serial} prevents sequences of reads and writes from conflicting in a cyclic way, 
establishing a  global order of transactions.
Reasoning only at the level of reads and writes may be overly conservative:
higher-level operations may commute even if their reads and writes conflict~\cite{Wei88}.
Consider an execution of a concurrent \emph{list-based set} depicted in 
Figure~\ref{fig:ex1}.
We assume here that the set initial state is $\{1,3,4\}$.
Operation $\lit{contains}(5)$ is concurrent, first with 
operation $\lit{insert}(2)$ and then with operation $\textsf{insert}(5)$. 
The history is not serializable:
$\lit{insert}(5)$ sees 
the effect of $\lit{insert}(2)$ because $R(X_1)$ by $\lit{insert}(5)$ returns the value
of $X_1$ that is updated by $\lit{insert}(2)$ and
thus should be serialized after it. But $\lit{contains}(5)$ misses
element $2$ in the linked list, but must see the
effect of $\lit{insert}(5)$ to perform the read of $X_5$, i.e., the element created by $\lit{insert}(5)$.  
However, this history is LSL since each of the three local histories is consistent with some
sequential history of $\LL$. 

Multilevel serializability~\cite{Wei86,WV02-book} was 
proposed to reason in terms of multiple semantic levels in the same execution.
\LS-linearizability, being defined for two levels only, does not require a global serialization of low-level operations as
$2$-level serializability does. 
LS-linearizability simply requires each process  to observe a local serialization, which can be different from one
process to another. Also, to make it more suitable for concurrency
analysis of a concrete data structure, instead of semantic-based commutativity~\cite{Wei88}, we use the sequential
specification of the high-level behavior of the object~\cite{HW90}.

Linearizability~\cite{HW90,AW04} only accounts for high-level
behavior of a data structure,  so it does not imply
LS-linearizability. For example, Herlihy's universal
construction~\cite{Her91} provides a linearizable implementation for
any given object type, but does not guarantee that each execution locally appears
sequential with respect to any sequential implementation of the type.    
Local serializability, by itself, does not require any synchronization
between processes and can be trivially implemented without
communication among the processes.
Therefore, the two parts of LS-linearizability indeed complement each other.  

%

%
\section{The concurrency metric}\label{sec:concurrency}
To characterize the ability of a concurrent implementation to process arbitrary interleavings of sequential code, we introduce 
the notion of a \emph{schedule}.
Intuitively, a schedule describes the order in which complete high-level
operations, and sequential reads and writes are invoked by the user. 
More precisely, a schedule is 
an equivalence class of complete histories that agree on
the \emph{order} of invocation and response events of reads, writes and high-level operations, but 
not necessarily on read \emph{values} or high-level responses.
Thus, a schedule can be treated as a history, where responses of reads and operations
are not specified. 

We say that an implementation $I$ \emph{accepts} a schedule $\sigma$ if 
it exports a history $H$ such that $\ms{complete}(H)$ exhibits
the order of $\sigma$, where $\ms{complete}(H)$ is the subsequence of $H$
that consists of the events of the complete operations that returned a matching response. 
We then say that the execution (or history) \emph{exports} $\sigma$. 
A schedule $\sigma$ is 
$(\ms{IS},\tau)$-LSL if there
exists an $(\id{IS},\tau)$-LSL history that exports $\sigma$.

A \emph{synchronization technique} is a set of concurrent implementations.
We define below a specific optimistic synchronization technique and then
a specific pessimistic one.

\vspace{1mm}\noindent\textbf{The class $\mathcal{SM}$.}
Let $\alpha$ denote the execution of a TM implementation and
$\ms{ops}(\alpha)$, 
the set of transactions each of which performs at least one event in $\alpha$.
Let ${\alpha}^k$ denote the prefix of $\alpha$ up to the last event of transaction $\pi_k$.
Let $\ms{Cseq}(\alpha)$ denote the set of subsequences of ${\alpha}$  that
consist of all the events of transactions that are committed and some
transactions that started committing in $\alpha$. 
We say that $\alpha$ is \emph{strictly serializable} if 
there exists a legal sequential execution $\alpha'$ equivalent to
a sequence in $\sigma\in\ms{Cseq}(\alpha)$
such that $\rightarrow_{\sigma} \subseteq \rightarrow_{\alpha'}$. 

This paper focuses on TM-based implementations that are strictly
serializable and, in addition, guarantee that every
transaction (even aborted or incomplete) observes correct (serial)
behavior.  
More precisely, an execution $\alpha$ is  \emph{safe-strict serializable} if
(1) $\alpha$ is strictly serializable, and
(2) for each operation $\pi_k$ that is incomplete or returned $\bot$ in
$\alpha$, there exist a legal sequential execution of transactions
$\alpha'=\pi_0\cdots \pi_i\cdot \pi_k$ and
$\sigma\in\ms{Cseq}(\alpha^k)$ such that $\{\pi_0,\cdots, \pi_i\}
\subseteq \ms{ops}(\sigma)$ and $\forall \pi_m\in \ms{ops}(\alpha'):{\alpha'}|m={\alpha^k}|m$.

Safe-strict serializability captures nicely both local serializability
and linearizability. 
If we transform a sequential implementation
$\id{IS}$ of a type $\tau$ into a concurrent one using any
\emph{safe-strict serializable} TM, 
we obtain an LSL TM-based
implementation of $(\id{IS},\tau)$. 
Indeed, by running each operation of $\id{IS}$ within a transaction of
a safe-strict serializable TM, we make sure that operations in 
committed transactions witness the same execution of $\id{IS}$, and
every operation that returned $\bot$ is consistent with some
execution of $\id{IS}$ based on previously completed operations.
Formally, $\mathcal{SM}$ denotes the set of TM-based LSL
implementations.
(We discuss the relations to similar but stronger TM criteria, such as
opacity~\cite{tm-book}, \emph{TMS1}~\cite{TMS09} and
\emph{VWC}~\cite{damien-vw} in Section~\ref{sec:related}.)   

\vspace{1mm}\noindent\textbf{The class $\mathcal{P}$.}
This denotes the set of \emph{deadlock-free} pessimistic
LSL implementations: assuming that every process takes
enough steps, 
at least one of the concurrent operations return a matching response~\cite{HS11-progress}.
Note that $\mathcal{P}$ includes implementations that are not necessarily safe-strict serializable. 
In the next section, we describe a pessimistic implementation of the list-based set that accepts non-serializable schedules by
fine-tuning to the semantics of the \emph{set} type.
%
\begin{figure*}
 \includegraphics[scale=0.45]{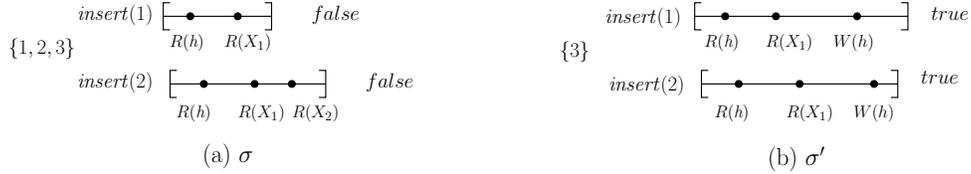}
 \caption{\small{%
(a) a history exporting schedule $\sigma$, with initial state
   $\{1,2,3\}$, accepted by $I^{C}\in \mathcal{SM}$; 
(b) a history exporting a problematic schedule $\sigma'$, with initial state 
   $\{3\}$, which should be accepted by any $I\in\mathcal{P}$ if it accepts $\sigma$}}\label{fig:ex2}%
\vspace{-0.35mm}
\end{figure*}
\section{On the incomparability of synchronization techniques}\label{sec:incomparability}
We now provide a concurrency analysis of synchronization techniques $\mathcal{SM}$ and $\mathcal{P}$
in the context of the list-based set.
%
We describe a pessimistic implementation of $(\LL,\ms{set})$, $I^H \in \mathcal{P}$,
that accepts non-serializable schedules: each read operation performed by $\lit{contains}$ 
acquires the \emph{shared lock} on the object, 
reads the $\ms{next}$ field of the element before releasing the shared lock on the predecessor element 
in a \emph{hand-over-hand} manner~\cite{BS88}.
Update operations ($\lit{insert}$ and
$\lit{remove}$) acquire the \emph{exclusive lock} on the $\ms{head}$ during $\lit{read}(\ms{head})$ 
and release it at the end. Every other read operation performed 
by update operations simply reads the element $\ms{next}$ field to traverse the list. The write operation
performed by an $\lit{insert}$ or a $\lit{remove}$ acquires the exclusive lock, writes the value
to the element and releases the lock.
There is no real concurrency between any two update operations since the process holds the 
exclusive lock on the $\ms{head}$ throughout the operation execution.
Note that $I^H$ is deadlock-free and $(\LL,\ms{set})$-LSL.

On the one hand, the schedule of $(\LL,\ms{set})$ depicted in Figure~\ref{fig:ex1}, which we denote by $\sigma_0$, is not serializable
as explained in Section~\ref{sec:lin}
and must be rejected by any implementation in $\mathcal{SM}$.
However, there exists an execution of $I^H$ that exports $\sigma_0$
since there is no read-write conflict on any two consecutive elements accessed.

On the other hand, consider the schedule $\sigma$ of $(\LL,\ms{set})$ in Figure~\ref{fig:ex2}(a).
Clearly, $\sigma$ is serializable and is accepted by most (progressive~\cite{tm-theory}) TM-based implementations since there
is no read-write conflict.
However, we prove that $\sigma$ is not accepted by any implementation in $\mathcal{P}$.
Our proof technique is interesting in its own right: we show that
if there exists any implementation in $\mathcal{P}$ that accepts $\sigma$, it must also
accept the schedule $\sigma'$ depicted in Figure~\ref{fig:ex2}(b). In $\sigma'$, 
$\lit{insert}(2)$ overwrites the write on \emph{head} performed by $\lit{insert}(1)$
resulting in a lost update. By deadlock-freedom, there exists an extension of $\sigma'$ in which
a $\lit{contains}(1)$ returns $\false$; but this is not a linearizable schedule. 
%
%
\begin{theorem}[Incomparability]
\label{th:mpl}
There exist schedules $\sigma_0$ and $\sigma$ 
of $(\LL,\ms{set})$ such that
(1) $\sigma_0$ is accepted by an $(\LL,\ms{set})$-LSL implementation $I^H \in \mathcal{P}$
but not accepted by \emph{any} $(\LL,\ms{set})$-LSL implementation in $\mathcal{SM}$, and
(2) $\sigma$ is accepted by an $(\LL,\ms{set})$-LSL implementation $I^C \in \mathcal{SM}$
but not accepted by \emph{any} $(\LL,\ms{set})$-LSL implementation in
$\mathcal{P}$.
(The proof is in Appendix~\ref{app:smp}.)
\end{theorem}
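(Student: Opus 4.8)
The plan is to establish the two claims separately, each witnessed by a concrete schedule of $(\LL,\ms{set})$, using the examples already sketched in Figures~\ref{fig:ex1} and~\ref{fig:ex2}.

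For part (1), I would take $\sigma_0$ to be the schedule exported by the execution in Figure~\ref{fig:ex1}, with initial state $\{1,3,4\}$. First I would verify that $I^H$ accepts $\sigma_0$: in that execution $\lit{contains}(5)$ traverses the list hand-over-hand while $\lit{insert}(2)$ and $\lit{insert}(5)$ hold the exclusive lock on $\ms{head}$; since no two \emph{consecutive} elements touched by $\lit{contains}(5)$ are write-locked simultaneously by an update, all shared-lock acquisitions succeed, all operations return matching responses, and the exported history is $(\LL,\ms{set})$-LSL (as already asserted for $I^H$). Then I would show no implementation in $\mathcal{SM}$ accepts $\sigma_0$: any such implementation is safe-strict serializable, hence its committed operations admit a single legal sequential witness; but, as argued in Section~\ref{sec:lin}, the read-write dependencies in $\sigma_0$ force $\lit{insert}(5)$ after $\lit{insert}(2)$ (via $R(X_1)$) while $\lit{contains}(5)$ must be placed after $\lit{insert}(5)$ (it reads $X_5$) yet before $\lit{insert}(2)$ (it misses value $2$), giving a cycle in any serialization order — so no legal sequential execution respecting $\ms{complete}(H)$ exists, and $\sigma_0$ cannot be exported by any $H$ in $\mathcal{SM}$.

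For part (2), I would take $\sigma$ to be the schedule of Figure~\ref{fig:ex2}(a), with initial state $\{1,2,3\}$, realized by $I^C\in\mathcal{SM}$ (e.g.\ any progressive, safe-strict serializable TM): since there is no read-write conflict between the concurrent transactions in $\sigma$, progressiveness guarantees all of them commit, so $I^C$ accepts $\sigma$. The substance is showing that \emph{no} $I\in\mathcal{P}$ accepts $\sigma$. The argument, as previewed before the theorem, is by contradiction with an ``indistinguishability / deadlock-freedom'' chain: suppose $I\in\mathcal{P}$ accepts $\sigma$ via some execution $\alpha$. Because $I$ is pessimistic (it never aborts) and deadlock-free, and because the two update operations in $\sigma$ are scheduled so that neither can observe the other's write before issuing its own write (their read-sets and the prefix seen are identical to solo executions), I would build, step by step, an execution exporting the ``lost-update'' schedule $\sigma'$ of Figure~\ref{fig:ex2}(b) with initial state $\{3\}$: $\lit{insert}(2)$ and $\lit{insert}(1)$ both read $\ms{head}$ pointing to $X_3$, both decide to link their new node before $X_3$, and $\lit{insert}(2)$'s write to $\ms{head}$ overwrites $\lit{insert}(1)$'s — the obedience assumption on reads/writes and the fact that the relative scheduling of the two writes is unconstrained in $\sigma'$ make this execution a legal behavior of the read/write layer of $I$. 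Then, invoking deadlock-freedom, I extend $\sigma'$ by a solo $\lit{contains}(1)$ that must complete with a matching response; following the (now corrupted) list from $\ms{head}$ it reaches $X_1\to\ldots$? — in fact it finds only value $2$ then $3$, missing $1$, so it returns $\false$. But the resulting high-level history has $\lit{insert}(1)$ return $\true$ (it committed, not aborted) followed by $\lit{contains}(1)$ returning $\false$ with no intervening $\lit{remove}(1)$, which is not linearizable with respect to $\ms{set}$ — contradicting $(\LL,\ms{set})$-LSL-ness of $I$.

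The main obstacle I anticipate is the second half of part (2): making rigorous the claim that accepting $\sigma$ \emph{forces} acceptance of the malformed schedule $\sigma'$. This requires carefully exploiting the definition of pessimistic implementation (every read-write event appears, no $\bot$), the ``obedient implementation'' restriction (no read value depends on a future write, and the complete effect of an operation is not crammed into its first read/write), and the fact that a schedule fixes only the \emph{order} of invocations/responses, not the values — so that from $I$'s viewpoint the two updates in $\sigma$ are indistinguishable from the two updates in $\sigma'$ up to the point where the lost update becomes irrevocable, after which deadlock-freedom does the rest. The remaining pieces — verifying $I^H$ accepts $\sigma_0$, the non-serializability cycle for $\sigma_0$, and progressiveness of $I^C$ on $\sigma$ — are comparatively routine.
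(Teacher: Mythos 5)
Your proposal is correct and takes essentially the same route as the paper's proof: part (1) via the serialization-cycle argument for $\sigma_0$ (with $I^H$ accepting it thanks to hand-over-hand locking), and part (2) via arguing that any pessimistic, deadlock-free, obedient implementation accepting $\sigma$ behaves identically on the lost-update schedule $\sigma'$ up to the reads of $X_1$, after which deadlock-freedom and local serializability force both inserts to write to $\ms{head}$, so an extension with $\lit{contains}(1)$ violates linearizability. The only differences are presentational: the paper pins the indistinguishability cut-off exactly at the responses of the reads of $X_1$ (value $1$ versus $3$) and spells out the two-case analysis on which value of $X_4$ is read by $\lit{contains}(5)$, both of which your plan compresses but does not contradict.
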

The second part of
Theorem~\ref{th:mpl} 
may look surprising, as 
the class $\mathcal{P}$ includes implementations that are relaxed (not safe-strict serializable) and 
fine-tuned to the semantics of the type whereas implementations in the class $\mathcal{SM}$ are oblivious to the semantics of the data type.
However, since TM-based implementations are optimistic, i.e., every read-write operation remains tentative, 
the implementation does not need to be overly conservative and could return $\bot$ in case a matching response to the 
operation cannot be returned.
\section{On the benefits of being optimistic and relaxed }
\label{sec:svlock}
We now combine the benefits of relaxation and optimism to derive an optimistic implementation 
of the list-based set
that supersedes every implementation in classes $\mathcal{P}$ and
$\mathcal{SM}$ in terms of concurrency.
Our implementation, denoted $I^{RM}$ provides processes with algorithms for 
implementing read and write operations on the elements of the list for each operation of the 
list-based set (Algorithm~\ref{alg:elastic}).
\begin{algorithm*}[t]
\caption{Code for process $p_k$ implementing reads and writes in implementation $I^{RM}$}
\label{alg:elastic}
  \begin{algorithmic}[1]
  	\begin{multicols}{2}
	{\size
	
	\Part{Shared variables}{
	  \State for each object $X_{\ell}$:
	  \State ~~~~~~$\ms{t-var}[\ell]$, initially 0
	  \State ~~~~~~$r[\ell]$, initially $\false$
	  \State ~~~~~~$L[\ell]\in \N \times \{\lit{true}, \lit{false}\}$ supports $\lit{read}$
	  \State ~~~~~~~~~~~$\lit{write}, \lit{cas}$ operations, initially $\tup{0, \lit{false}}$
	}\EndPart	
	
   	\Statex
	  
	\Part{Local variables of process $p_k$}{
	  \State $\ms{rbuf}_k[i] \subset X \times \N$; $i=\{1,2\}$ cyclic buffer of size $2$,  
	  \State ~~~initially $\emptyset$
	}\EndPart	
	
	\Statex
%
	
   	\Part{$\lit{read}_k(X_{\ell})$ executed by $\lit{insert}$, $\lit{remove}$, $\lit{contains}$}{	  
	  \State $\tup{\ms{ver_1}, *} \gets L[\ell].read()$ \label{line:rver} \Comment{get versioned lock}
	  \State $val \gets \ms{t-var}[\ell].read()$ \Comment{get value} \label{line:linr}
	  \State $r \gets r[\ell].read()$ \label{line:reread}
	  \State $\tup{\ms{ver_2}, *} \gets L[\ell].read()$ \Comment{reget versioned lock}
	  \If{$(\ms{ver}_1 \neq \ms{ver}_2) \vee r$} \label{line:flagcheck}
	  	\Return $\bot$ \label{line:elastic:abort1} \EndReturn
	  \EndIf
	  \State $\ms{rbuf}_k.\lit{add}(\tup{X_{\ell},\ms{ver}_{1}})$
          \Comment{override penultimate entry}
	  \Return $\ms{val}$ \EndReturn \label{line:tx-read}
	}\EndPart
	  
	
	\newpage

	\Part{$\lit{write}_k(X_{\ell},v)$ executed by $\lit{remove}$}{
	  \State {\bf let} $\ms{oldver}_\ell$ be such that $\tup{X_\ell, \ms{oldver}_\ell} \in \ms{rbuf}_k$
	  \State $\ms{ver} \gets \ms{oldver}_\ell$
	  \If{$ \neg L[\ell].cas(\tup{\ms{ver}, \lit{false}}, \tup{\ms{ver}, \lit{true}})$} \label{line:linw}
	  		\Return $\bot$  \label{line:elastic:abort2}\Comment{grab lock or abort}	  	\EndReturn
	  \EndIf
	  \State {\bf let} $X_{\ell'}\neq X_{\ell}$ be such that $\{ X_{{\ell}^\prime},ver_{\ell'} \}\in \ms{rbuf}_k$
	  \If{$ \neg L[\ell'].cas(\tup{\ms{ver}_{\ell'}, \lit{false}}, \tup{\ms{ver}_{\ell'}, \lit{true}})$} \label{line:linw2}
	  		\Return $\bot$  \label{line:elastic:abort2}\Comment{grab lock or abort}	  	\EndReturn
	  \EndIf
	  \State $r[\ell^\prime].write(\true)$ \label{line:elastic:gc}  \Comment{mark element for deletion}
	      
	  \State $\ms{t-var}[\ell].write(v)$ \Comment{update memory} \label{line:commit-update}
	  \State $L[\ell].\lit{write}(\tup{\ms{ver}+1, \lit{false}})$\Comment{release locks} \label{line:release}
	  \State $L[\ell'].\lit{write}(\tup{\ms{ver}_{\ell'}+1, \lit{false}})$ \label{line:release2}
	  \Return $\lit{ok}$ \EndReturn
	}\EndPart

	\Statex
	\Part{$\lit{write}_k(X_{\ell},v)$ executed by $\lit{insert}$}{
	  \State {\bf let} $\ms{oldver}_\ell$ be such that $\tup{X_\ell, \ms{oldver}_\ell} \in \ms{rbuf}_k$
	  \State $\ms{ver} \gets \ms{oldver}_\ell$
	  \If{$ \neg L[\ell].cas(\tup{\ms{ver}, \lit{false}}, \tup{\ms{ver}, \lit{true}})$} \label{line:inswrite}
	  		\Return $\bot$  \label{line:elastic:abort3}\Comment{grab lock or abort}	  	\EndReturn
	  \EndIf
      \State $\ms{t-var}[\ell].write(v)$ \Comment{update memory} \label{line:inscommit}
      \State $L[\ell].\lit{write}(\tup{\ms{ver}+1, \lit{false}})$\Comment{release locks} \label{line:insrel}
	  \Return $\lit{ok}$ \EndReturn
	}\EndPart

  }
  \end{multicols}
  \end{algorithmic}\vspace{-1em}
\end{algorithm*}

%

%
%

Every object (or element) $X_{\ell}$ is
specified 
by the following shared variables: $\ms{t-var}[\ell]$
stores the \emph{value} $v\in V$ of $X_{\ell}$, $r[\ell]$ stores
a boolean indicating if $X_{\ell}$ is \emph{marked for deletion}, 
$L[\ell]$ stores a tuple of the \emph{version number} of $X_{\ell}$ and a \emph{locked} flag; 
the latter indicates whether a concurrent process is performing a write to $X_{\ell}$.

Any operation with input parameter $v$ traverses the list starting from the
$\ms{head}$ element up to the element storing value $v'\geq v$ without writing to shared memory.
If a read operation on an element conflicts with a write operation to the same element or
if the element is marked for deletion, the operation terminates by returning $\bot$.
While traversing the list, the process maintains the last two read elements and their version numbers 
in the local rotating buffer $\ms{rbuf}$. If none of the read operations performed by $\lit{contains}(v)$ return $\bot$
and if $v'=v$, then $\lit{contains}(v)$ returns $\lit{true}$; otherwise it returns $\lit{false}$.
Thus, the $\lit{contains}$ does not write to shared memory.

To perform write operation to an element as part of an update operation ($\lit{insert}$ and $\lit{remove}$), the process 
first retrieves the version of the object that belongs to its rotating buffer.
It returns $\bot$ if the version has been changed since the previous read of the element 
or if a concurrent process is executing a write to the same element.
Note that, technically, $\bot$ is returned only if $\ms{prev.next}\not\rightarrow \ms{curr}$.
If $\ms{prev.next}\rightarrow \ms{curr}$, then we attempt to lock the element with the current version
and return $\bot$ if there is a concurrent process executing a write to the same element.
But we avoid expanding
on this step in our algorithm pseudocode.
The write operation performed by the $\lit{remove}$ operation, additionally checks if the element to be removed
from the list is locked by another process; if not, it sets a flag on the element to mark it for deletion.
If none of the read or write operations performed during the $\lit{insert}(v)$ or $\lit{remove}(v)$ returned $\bot$,
appropriate matching responses are returned as prescribed by the sequential implementation $\LL$.
Any update operation of $I^{RM}$ uses at most two expensive synchronization
patterns~\cite{AGK11-popl}.

Theorem~\ref{th:lr} in Appendix~\ref{app:lltm} shows that $I^{RM}$ is $(\LL,\ms{set})$-LSL.
The pseudocode in Algorithm~\ref{alg:elastic} is given for managed languages as there is no explicit garbage collector, 
but one could add 
an epoch-based garbage collector that deallocates a node as soon as all operations concurrent with its removal are complete.

Now we show that $I^{RM}$ supersedes, in terms of concurrency, \emph{any} implementation in classes
$\mathcal{P}$ or $\mathcal{SM}$.
The proof is based on a more general optimality result, interesting in its own right: 
any finite schedule rejected by $I^{RM}$ is not \emph{observably
LS-linearizable} (or simply \emph{observable}). 
A schedule $\sigma$ is observable 
if it has an extension $\sigma'$ such that for all $v\in \mathbb{Z}$, 
$\sigma'$ extended with a complete execution of $\lit{contains}(v)$ that returns a matching response is LS-linearizable. 
Intuitively, a schedule is observable if it incurs no lost updates.
One example of a non-observable schedule is $\sigma'$ in
Figure~\ref{fig:ex2}(b): since one of the two concurrent
updates overwrites the effect of the other, $\sigma'$ extended with a
complete execution of $\lit{contains}(1)$ 
is not linearizable with respect to $\ms{set}$.
\begin{theorem}[Optimality]
\label{th:lrelaxed}
$I^{RM}$ accepts all schedules that are observable with respect to $(\ms{LL},\ms{set})$.
\end{theorem}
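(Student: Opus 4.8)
The plan is to show the contrapositive: if a finite schedule $\sigma$ is rejected by $I^{RM}$, then $\sigma$ is not observable. Since $I^{RM}$ rejects $\sigma$ only if some read or write of some operation returns $\bot$ on every history exporting $\sigma$, I would first analyze \emph{why} the algorithm returns $\bot$. Inspecting Algorithm~\ref{alg:elastic}, there are exactly three syntactic abort sites: (i)~a read of $X_\ell$ aborts at line~\ref{line:elastic:abort1} because either the version of $X_\ell$ changed between the two reads of $L[\ell]$, or because $X_\ell$ is marked for deletion ($r[\ell]=\true$); (ii)~a write of $X_\ell$ aborts at line~\ref{line:linw}, \ref{line:linw2} or~\ref{line:inswrite} because the CAS on $L[\ell]$ (or $L[\ell']$) fails, i.e., the version no longer matches what the operation stored in $\ms{rbuf}_k$, or the locked flag is set. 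In all cases, the abort is caused by a concurrent write that either (a)~changed the version of an element that $\pi$ previously read, or (b)~is in progress (locked flag) on such an element, or (c)~marked such an element for deletion. The first step is thus a case analysis establishing this ``a concurrent \emph{committed or in-flight} update touched an element $\pi$ depends on'' characterization, and in particular noting that whenever $I^{RM}$ could be \emph{forced} to reject $\sigma$ (rejection on every exporting history, not just some adversarial interleaving of base-object steps), there must be two operations $\pi, \pi'$ concurrent in $\sigma$, with $\pi'$ an update, such that $\pi'$ writes an element $X_\ell$ that $\pi$ read, and the schedule orders $\pi$'s read of $X_\ell$ and $\pi'$'s write of $X_\ell$ so that neither can be ``explained away'' by re-ordering within $\sigma$'s constraints.

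The second step is to turn this conflict into a \emph{lost update}. Following the template of the proof sketch for Theorem~\ref{th:mpl} (the $\sigma \Rightarrow \sigma'$ argument around Figure~\ref{fig:ex2}(b)), I would argue: given such a forced rejection, $\sigma$ together with the ordering constraints it imposes must already contain two update operations $\pi_1, \pi_2$ on the list-based set whose write-sets on the $\ms{next}$/$\ms{head}$ pointers overlap in a way that forces one to overwrite the effect of the other in \emph{any} history exporting $\sigma$ (this is where the structure of $\LL$ enters: an $\lit{insert}(v)$ or $\lit{remove}(v)$ writes exactly the $\ms{next}$ pointer of $v$'s predecessor in the traversal, and if two concurrent updates share that predecessor while $\sigma$ forces both writes to ``see'' the pre-update pointer value, the later write erases the earlier link change). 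Concretely, I would show $\sigma$ has a prefix after which two concurrent updates $\lit{insert}(v_1)$ and $\lit{insert}(v_2)$ (or the appropriate $\lit{remove}$ variants) both write the predecessor element based on the same read version, so the resulting linked-list state omits one of the two updates. Then, extending $\sigma$ by a complete $\lit{contains}(v_i)$ for the ``lost'' $v_i$: in the real execution this traversal reads a list in which $v_i$ is absent and returns $\false$, but local serializability plus linearizability with respect to $\ms{set}$ would require $\tilde H$ to place $\lit{contains}(v_i)$ after both updates completed, forcing the answer $\true$ — contradiction. Hence $\sigma$ has no extension that remains LS-linearizable when probed by $\lit{contains}$, i.e., $\sigma$ is not observable.

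The third step handles the remaining abort causes — the ``marked for deletion'' flag $r[\ell]$ and the in-flight locked flag — showing they are not spurious: a read aborts on $r[\ell]=\true$ only when a concurrent $\lit{remove}$ has logically unlinked $X_\ell$, and a read/write aborts on a set locked flag only when a concurrent update is between its CAS (line~\ref{line:linw}/\ref{line:inswrite}) and its $L[\ell].\lit{write}$ release (line~\ref{line:release}/\ref{line:insrel}); in both situations one can, using deadlock-freedom of the concurrent update (and obedience of reads/writes), extend $\sigma$ to a point where the same overlapping-predecessor-write situation as above is forced, reducing to the lost-update argument. One should also dispatch the easy direction implicitly: that when no such forced conflict exists, $I^{RM}$ \emph{does} admit an exporting history (every abort site can be avoided by a favorable but legal scheduling of base-object steps), so rejection really does coincide with the lost-update obstruction.

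The main obstacle I anticipate is the quantifier in ``rejected'': $I^{RM}$ rejects $\sigma$ iff it is \emph{impossible} to export $\sigma$ without some $\bot$, so I must rule out the possibility that a single aborting read is an artifact of one bad interleaving of base-object primitives rather than an inherent property of $\sigma$. This requires carefully exhibiting, for any ``benign'' $\sigma$ (one without the predecessor-overwrite conflict), an explicit schedule of the internal \lit{read}/\lit{write}/\lit{cas} steps under which no version check fails and no locked-or-marked flag is observed — essentially showing the version/lock mechanism is \emph{tight}: it aborts exactly on the schedules that are genuinely non-observable. Getting this tightness right, together with correctly identifying which element each update of $\LL$ writes (so that ``overlapping write-sets'' is characterized precisely), is the technical crux; the rest follows the already-established $\sigma \Rightarrow \sigma'$ lost-update pattern and the compositional/linearizability machinery of Section~\ref{sec:lin}.
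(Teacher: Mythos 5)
Your overall route is the same as the paper's: prove the contrapositive, enumerate the three abort sites of Algorithm~\ref{alg:elastic}, argue that an abort \emph{forced} by the schedule (rather than by an unlucky interleaving of base-object steps) can only come from a concurrent update conflicting on an element the aborting operation depends on, and then extend the schedule with a complete $\lit{contains}$ to contradict LS-linearizability---the same $\sigma\Rightarrow\sigma'$ lost-update pattern used for Theorem~\ref{th:mpl}(2). Your explicit handling of the quantifier in ``rejected'' (ruling out aborts that are artifacts of one bad interleaving, e.g.\ the version changing between the two reads of $L[\ell]$) corresponds to the paper's remark that for such subcases ``there exist executions that export such schedules,'' and making that tightness argument explicit is a legitimate strengthening of the sketch rather than a deviation.

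There is, however, one branch where your plan would fail as stated. In your third step you reduce the abort on $r[\ell]=\true$ (a read of an element marked for deletion in line~\ref{line:elastic:gc}, caught by the check in line~\ref{line:flagcheck}) to the same overlapping-predecessor-write, lost-update situation, ``using deadlock-freedom of the concurrent update.'' But the aborting operation here may be a $\lit{contains}$, and the only update in the picture may be the single $\lit{remove}$ that marked and unlinked the element; the schedule then forces no second update, hence no lost update that a later $\lit{contains}$ could witness, and deadlock-freedom is a property of the class $\mathcal{P}$, not an assumption you may import about executions of $I^{RM}$. The paper dispatches this branch by a different argument: a read of a removed (deallocated) element is inconsistent with \emph{every} sequential execution of \LL, so accepting such a schedule would violate the \emph{local-serializability} half of LS-linearizability directly, independently of any linearizability/lost-update reasoning. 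You need an argument of that kind (non-observability via local serializability of the traversal itself) for this case; with that repair, the remaining cases of your analysis line up with the paper's proof.
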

\begin{proofsketch}
We prove that any schedule rejected by $I^{RM}$ is  
not observable.
We go through the cases when a read or write returns $\bot$ (implying the operation fails to return a matching response) and
thus the current schedule is rejected:
(1) $\lit{read}(X_{\ell})$ returns $\bot$ in line~\ref{line:elastic:abort1}
when $r[\ell]=\true$ or when $\ms{ver}_1\neq \ms{ver}_2$, (2)
$\lit{write}(X_{\ell})$ performed by $\lit{remove}(v)$ either returns $\bot$
in line~\ref{line:linw} when the $\lit{cas}$ operation on
$L[\ell]$ returns $\false$ or returns $\bot$
in line~\ref{line:linw2} when the $\lit{cas}$ operation on the element that
stores $v$ returns $\false$, and (3) $\lit{write}(X_{\ell})$ performed by $\lit{insert}$ returns $\bot$
in line~\ref{line:inswrite} when the $\lit{cas}$ operation on
$L[\ell]$ returns $\false$.

Consider the subcase (1a), $r[\ell]$ is set $\true$ by a
preceding or concurrent $\lit{write}(X_{\ell})$ (line~\ref{line:elastic:gc}).
The high-level operation performing this $\lit{write}$ is a $\lit{remove}$ 
that marks the corresponding list element as removed. 
Since no removed element can be read in a sequential execution
of $LL$, the corresponding history is not locally serializable.
Alternatively, in subcase (1b), the version of $X_{\ell}$ read previously in line~\ref{line:rver}
has changed. Thus, an update operation has concurrently performed a write to $X_{\ell}$.
However, there exist executions that export such schedules.

In case $(2)$, the $\lit{write}$ performed by a $\lit{remove}$ operation returns $\bot$.
In subcase (2a), $X_{\ell}$ is currently
locked. 
Thus, a concurrent high-level operation has previously locked $X_{\ell}$ (by successfully
performing $L[\ell].\lit{cas}()$ in line~\ref{line:linw}) and has not yet released
the lock (by writing $\tup{\ms{ver}', \lit{false}}$ to
$L[\ell]$ in line~\ref{line:release}). 
In subcase (2b), the current version of $X_{\ell}$ (stored in $L[\ell]$) differs 
from the version of $X_{\ell}$ witnessed by a preceding $\lit{read}$. 
Thus, a concurrent high-level operation completed a write to $X_{\ell}$
\emph{after} the current high-level operation $\pi$ performed a $\lit{read}$ of $X_{\ell}$.
In both (2a) and (2b), a concurrent high-level updating operation $\pi'$
($\lit{remove}$ or $\lit{insert}$) has written or is about to perform a $\lit{write}$ to $X_{\ell}$.  
In subcase (2c), the $\lit{cas}$ on the element $X_{\ell'}$ (element that stores the value $v$) executed by $\lit{remove}(v)$ 
returns $\false$ (line~\ref{line:linw2}).
Recall that by the sequential implementation $\LL$, 
$\lit{remove}(v)$ performs a $\lit{read}$ of
$X_{\ell'}$ prior to the $\lit{write}(X_{\ell})$, where $X_{\ell}.\textit{next}$
refers to $X_{\ell'}$.
If the \emph{cas} on $X_{\ell'}$ fails, there exists a process 
that concurrently performed a $\lit{write}$ to $X_{\ell'}$, but
after the $\lit{read}$ of $X_{\ell'}$ by $\lit{remove}(v)$.
In all cases, we observe that if we did not abort the write to $X_{\ell}$, then
the schedule extended by a complete execution of $\lit{contains}$ is not LSL.

In case $(3)$, the $\lit{write}$ performed by an $\lit{insert}$ operation returns $\bot$. Similar arguments to
case $(2)$ prove that any schedule rejected is not observable LSL.
\end{proofsketch}
Theorem~\ref{th:lrelaxed} implies that the schedules exported by the
histories in Figures~\ref{fig:ex1} and \ref{fig:ex2}(a) and that are not
accepted by any $I'\in \mathcal{SM}$ and any $I\in \mathcal{P}$, respectively,
are indeed accepted by $I^{RM}$.
But it is easy to see that implementations in $\mathcal{SM}$ and $\mathcal{P}$ can only
accept observable schedules.  
As a result, $I^{RM}$ can be shown to strictly supersede any
pessimistic or TM-based implementation of the list-based set.  
\begin{corollary}
\label{cr:mrp}
$I^{RM}$ accepts every schedule accepted by any implementation in $\mathcal{P}$ and $\mathcal{SM}$.
Moreover, $I^{RM}$ accepts schedules $\sigma$ and $\sigma'$ that are rejected by any
implementation in $\mathcal{P}$ and $\mathcal{SM}$, respectively. 
\end{corollary}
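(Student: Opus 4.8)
The plan is to derive the corollary directly from the optimality result (Theorem~\ref{th:lrelaxed}) and the incomparability result (Theorem~\ref{th:mpl}), via a single connecting lemma: \emph{every schedule accepted by an implementation in $\mathcal{P}\cup\mathcal{SM}$ is observable with respect to $(\LL,\ms{set})$}. Once this lemma is in hand, Theorem~\ref{th:lrelaxed} gives that $I^{RM}$ accepts every schedule accepted by any implementation in $\mathcal{P}$ or $\mathcal{SM}$; and for the ``moreover'' part, I would take $\sigma$ to be the schedule of Figure~\ref{fig:ex2}(a) and $\sigma'$ (to be read as $\sigma_0$, the schedule of Figure~\ref{fig:ex1}), observe that both are observable, invoke Theorem~\ref{th:lrelaxed} to conclude that $I^{RM}$ accepts them, and invoke Theorem~\ref{th:mpl} to conclude that the former is rejected by every implementation in $\mathcal{P}$ and the latter by every implementation in $\mathcal{SM}$.

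To prove the connecting lemma, I would fix $I\in\mathcal{P}\cup\mathcal{SM}$ and a schedule $\hat\sigma$ accepted by $I$, witnessed by a history $H$ exported by $I$ with $\ms{complete}(H)$ exhibiting the order of $\hat\sigma$. Since $I$ is $(\LL,\ms{set})$-LSL, $H$ is LS-linearizable, hence its high-level history is linearizable with respect to $\ms{set}$; completing it as needed and fixing a linearization $S$, let $s$ be the state of the \emph{set} type reached at the end of $S$. I would then argue that appending to $\hat\sigma$ a \emph{solo} complete execution of $\lit{contains}(v)$ run on $\LL$ from the configuration encoding $s$: (a)~returns exactly the response prescribed by $\LL$; (b)~contributes a local history that is a legal read-only traversal of the sorted list, hence a history of $\LL$, so local serializability is preserved; and (c)~keeps the high-level history linearizable, since $\lit{contains}$ leaves the set unchanged. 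As $v$ is arbitrary, $\hat\sigma$ is observable. I would stress that this argument needs only the \emph{existence} of such an extension, which is a property of $\hat\sigma$ alone: in particular, no progress assumption on the TMs underlying $\mathcal{SM}$ is required, and for $\mathcal{P}$ deadlock-freedom additionally guarantees that $I$ itself can realize the extension.

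For the ``moreover'' part I would observe that the schedules of Figures~\ref{fig:ex1} and~\ref{fig:ex2}(a) are themselves observable: their constituent local histories were already shown in Section~\ref{sec:lin} to be consistent with sequential histories of $\LL$, so the schedules are LS-linearizable, and since neither incurs a lost update, appending a solo $\lit{contains}(v)$ from the resulting set state preserves LS-linearizability for every $v$. Hence, by Theorem~\ref{th:lrelaxed}, $I^{RM}$ accepts both; combined with Theorem~\ref{th:mpl} this gives a schedule accepted by $I^{RM}$ but rejected by every implementation in $\mathcal{P}$, and one accepted by $I^{RM}$ but rejected by every implementation in $\mathcal{SM}$, completing the proof.

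The main obstacle I anticipate is the bookkeeping in the connecting lemma: completing the possibly-incomplete operations of $H$ consistently with its chosen linearization before reading off the set state $s$, and then verifying that a solo $\lit{contains}(v)$ on the sorted linked list encoding $s$ yields a legal sequential sub-history whose high-level response matches the \emph{set} type. These checks are routine given the sequential code $\LL$, but they are the only place where genuine care is needed; everything else is a direct combination of Theorems~\ref{th:lrelaxed} and~\ref{th:mpl}.
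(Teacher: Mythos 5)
Your proposal is correct and follows essentially the same route as the paper: the paper derives the corollary from Theorem~\ref{th:lrelaxed} together with the (stated as ``easy to see'') observation that implementations in $\mathcal{P}$ and $\mathcal{SM}$ accept only observable schedules, and obtains the ``moreover'' part by applying Theorem~\ref{th:mpl} to the schedules of Figures~\ref{fig:ex1} and~\ref{fig:ex2}(a). Your connecting lemma and its sketch merely spell out that observation in more detail, and your reading of $\sigma'$ as the schedule $\sigma_0$ of Figure~\ref{fig:ex1} matches the paper's intent.
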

One take-away from these results is that generic optimistic implementations,
appropriately relaxed, are able to provide strictly more concurrency
than pessimistic or strongly consistent optimistic ones.
Our implementation $I^{RM}$ is in fact optimal with respect to
concurrency, while still incurring minimal cost in terms of step-complexity and
use of expensive synchronization patterns. 
%
%
%
%
\section{Related work}
\label{sec:related}
%
%
Sets of accepted schedules are commonly used as a
metric of concurrency provided by a shared-memory
implementation.
For static database transactions, 
Kung and Papadimitriou~\cite{KP79} use the metric to 
capture the parallelism of a locking scheme,
While acknowledging that the metric is theoretical, they 
insist that it may
have ``practical significance as
well, if the schedulers in question have relatively small
scheduling times as compared with waiting and execution
times.'' 
Herlihy~\cite{Her90} employed the metric to compare various
optimistic and pessimistic synchronization techniques using
commutativity
of operations constituting high-level transactions.   
A synchronization technique is implicitly considered in~\cite{Her90} as highly
concurrent, namely ``optimal'',
if no other technique accepts more schedules. 

By contrast, we focus here on a \emph{dynamic} model where the scheduler cannot 
use the prior knowledge of all the shared addresses to be accessed. 
Also, unlike~\cite{KP79,Her90}, 
we require \emph{all} operations, including aborted ones, to observe (locally) consistent states.
As we confirm experimentally, 
our (provably optimal) optimistic implementation incurs negligible scheduling overhead, which makes the motivation of the metric proposed
in~\cite{KP79} applicable. 

Gramoli et al.~\cite{GHF10} defined a concurrency metric, the \emph{input
acceptance}, as the ratio of committed transactions over aborted
transactions when TM executes the given schedule.   
Unlike our metric, input acceptance does not apply to
lock-based programs. 

\ignore{ 
A quite different but popular concurrency notion, called
\emph{disjoint-access parallelism}~\cite{israeli-disjoint}, 
was recently considered in the TM context~\cite{AHM09,EFKMT12}.
A TM is disjoint-access parallel (DAP) if it guarantees 
that two transactions accessing the same meta-data also access 
the same transactional object. 
Our concurrency metric not only applies to lock-based implementations but is also
more fine-grained ones \petr{did not get this point}
as it allows to determine relative concurrency provided 
by different DAP implementations.
}

Similar to other relaxations of opacity~\cite{tm-book} like \emph{TMS1}~\cite{TMS09} and \emph{VWC}~\cite{damien-vw},  
safe-strict serializable implementations ($\mathcal{SM}$) require that every transaction (even aborted and incomplete) observes
``correct'' serial behavior. 
However, unlike TMS1, we do not require the \emph{local} serial executions
to always respect the real-time order among transactions. 
Unlike VWC, we model transactional
operations as intervals with an invocation and a response and does not
assume unique writes (needed to define causal past in VWC). 
Therefore, $\mathcal{SM}$ appears weaker than both TMS1 and VWC.
Though weak and possibly not very pragmatic, it still allows us 
to show that resulting TM-based LSL implementations 
reject some schedules accepted by pessimistic locks. 
However, we can easily extend our results to show that even opaque TMs accept
some schedules rejected by any pessimistic algorithm.   
    
The problem of transforming a
sequential implementation of a list-based set into a concurrent
one was considered before in special settings.
Vechev and Yahav~\cite{vechev08} considered using locks while Felber et al.~\cite{FGG09} considered using elastic transactions.
Our framework applies to generic
concurrent transformations of sequential implementations
using arbitrary synchronization techniques. 
%
%
%
%
%

\section{Discussion and concluding remarks}
\label{sec:conc}
To confirm the practicality of our optimistic list-based set, $I^{RM}$,
we compared its Java implementation against a pessimistic
implementation \emph{HOHL}
that uses hand-over-hand locking (we adopted the Java pseudocode by Herlihy and 
Shavit~\cite[Chapter 9]{HS08-book}). 
Figure~\ref{fig:alg2-vs-hoh-10} presents the throughput 
(the number of completed operations per millisecond) 
of the two algorithms on a $32$-way machine 
where up to $32$ threads run $5$\% updates 
(either $\lit{remove}$ or $\lit{insert}$ with the same probability) 
and $95$\% $\lit{contains}$ operations on a list, initially populated
with $512$ integer values. 
$I^{RM}$ 
outperforms both HOHL and E-STM~\cite{FGG09}, and 
the reason for this could be 
that $I^{RM}$ is optimal in terms of concurrency (Theorem~\ref{th:lrelaxed}), 
while the HOHL is serializable~\cite{ARR10} and, thus, rejects
large classes of correct schedules (e.g., of the kind of 
$\sigma_0$ in~Figure~\ref{fig:ex1}). 
E-STM does not provide optimal concurrency but rejects less schedules than HOHL.
Additionally, our implementation uses asymptotically less expensive
memory barriers and read-modify-write primitives~\cite{AGK11-popl}
than HOHL.
We deduce that accepting all observable schedules may be 
quite efficient on some applications.
For the sake of simplicity, we did not optimize our code, e.g., 
by removing wrappers or using partial aborts.
(More experimental results are given in Appendix~\ref{sec:expe}.)

\begin{floatingfigure}[right]{0.27\paperwidth}
{\small
\hspace{-2em}\includegraphics[clip=true,viewport=5 0 375 112,scale=.9]{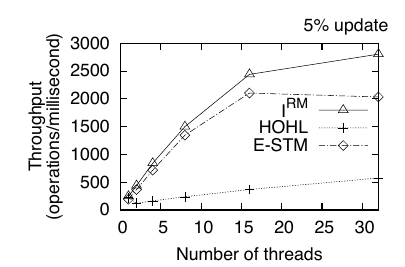}
   \caption{$I^{RM}$ vs. HOHL\label{fig:alg2-vs-hoh-10}}
}\end{floatingfigure}
%

$I^{RM}$ is surprisingly reminiscent of the
list-based set implementations in~\cite{HHL+05} and \cite{harris-set}
(state-of-the-art, to the best of our knowledge).
However, because of specific optimizations of the $\lit{contains}$
operation, strictly speaking, none of these two algorithms is locally
serializable, and thus LSL.
The implementations in~\cite{HHL+05,harris-set}
use the logical deletion technique to associate a \emph{marked} field to an element to indicate if it
is contained in the list.
The $\lit{contains}$ of the lazy list-based set~\cite{HHL+05} may read an element marked for deletion, whereas
the $\lit{contains}$ of Harris list-based set~\cite{harris-set} even uses $\lit{cas}$ to remove logically deleted nodes.
But the apparent similarities between our $I^{RM}$ and the algorithms
in~\cite{HHL+05,harris-set} suggest
that looking for a concurrency optimal LS-linearizable
implementation also helps in optimizing performance.  
An implementation, once proven to be concurrency optimal, may
be optimized further to boost performance, as is possible with $I^{RM}$.

We derived our concurrency lower bounds in the context of the list-based set, a data structure
that is suitable for exploiting concurrency because of its localized
updates, but we believe that it should be possible to generalize our results to a wider class of search structures. 
This paper provides some preliminary hints in the quest for 
the ``right'' synchronization technique to develop highly concurrent and efficient implementations of data 
types.
Our results are relevant to programmers leveraging
multi-core architectures as well as to computer manufacturers 
who aim at defining new instruction sets.
This work suggests directions to identify the ``killer'' application
for existing and emerging synchronization techniques.

\bibliographystyle{abbrv}
\newpage
\bibliography{references}

\appendix
\section{Sequential implementation of the set type}
\label{app:seq}
Recall that an \emph{object type} $\tau$ is a tuple
$(\Phi,\Gamma, Q, q_0, \delta)$ where
$\Phi$ is a set of operations,
$\Gamma$ is a set of responses, $Q$ is a set of states, $q_0\in Q$ is an
initial state and 
$\delta \subseteq Q\times \Phi \times Q\times \Gamma$ 
is a transition relation that determines, for each state,
and each operation, the set of possible
resulting states and  produced responses. 
Hence, $(q,\pi,q',r) \in \delta$ implies that when
an operation $\pi \in \Phi$ is applied on an object of type $\tau$
in state $q$, the object moves to state $q'$ and returns a response $r$.
We consider only types that are \emph{total}
 i.e., for every $q\in Q$,
$\pi \in \Phi$, there exist  $q' \in Q$ and
$r\in \Gamma$ such that $(q,\pi,q',r) \in \delta$.
We assume that every type $\tau=(\Phi,\Gamma, Q, q_0, \delta)$ is \emph{computable}, i.e., 
there exists a Turing machine that, 
for each input $(q,\pi)$, $q \in Q$, $\pi\in \Phi$, computes
a pair $(q',r)$ such that $(q,\pi,q',r) \in \delta$.

Formally, the \emph{set} type is defined by the tuple $(\Phi,\Gamma, Q, q_0, \delta)$ where:
\begin{enumerate}
\item[$\Phi$] $=\{\lit{insert}(v), \lit{remove}(v), \lit{contains}(v) \}$; $v \in \mathbb{Z}$ 
\item[$\Gamma$] $=\{\true,\false\}$ 
\item[$Q$] is the set of all finite subsets of $\mathbb{Z}$; $q_0=\emptyset$
\item[$\delta$] is defined as follows:
\begin{enumerate}
\item[$(1)$:]
$(q,\lit{contains}(v),q,(v\in q))$
\item[$(2)$:]
$(q,\lit{insert}(v),q \cup \{v\},(v \not\in q))$
\item[$(3)$:]
$(q,\lit{remove}(v),q \setminus \{v\},(v \in q))$
\end{enumerate}
\end{enumerate}
%
%
%
\begin{algorithm*}[t]
\caption{Sequential implementation {\LL} (\textit{sorted linked list}) of \emph{set} type}
\label{alg:lists}
  \begin{algorithmic}[1]
  	\begin{multicols}{2}
  	{\size
	
	\Part{Shared variables}{
		\State Initially $\ms{head}$, $\ms{tail}$,
		\State ~~~$\ms{head}.val=-\infty$, $\ms{tail}.val=+\infty$
		\State ~~~$\ms{head}.next=\ms{tail}$
	}\EndPart
	
	\Statex
	
	\Part{$\lit{insert}(v$)}{
                \State $\ms{old} \gets \ms{head}$  		 			\Comment{copy the address}
		\State $\ms{prev} \gets \lit{read}(\ms{head})$  			\Comment{fetch the head node struct}
		\State $\ms{curr} \gets \lit{read}(\ms{prev.next})$ 		\Comment{next element is stored}
		\While{$\ms{curr.val} < v $}
                       \State $\ms{old} \gets \ms{prev.next}$  \Comment{pointer to the next element}
			\State $\ms{prev} \gets \ms{curr}$ 				\Comment{move on}
			\State $\ms{curr} \gets \lit{read}(\ms{curr.next})$ 	\Comment{fetch from memory}
		\EndWhile
		\If{$\ms{curr.val} \neq v$}							\Comment{val is stored locally}
			\State $X \gets \lit{new-node}(v,\ms{prev.next})$ 	\Comment{v and addr. of curr}
			\State $\lit{write}(\ms{old}, [\ms{prev.val},X])$ 	\Comment{stores val and next field}
		\EndIf
		\Return $(\ms{curr.val}\neq v)$ 						
		\EndReturn
   	}\EndPart
	
	\newpage
	
	\Part{$\lit{remove}(v$)}{

	        \State $\ms{old} \gets \ms{head}$  					
		\State $\ms{prev} \gets \lit{read}(\ms{head})$  			\Comment{fetch the head node struct}
		\State $\ms{curr} \gets \lit{read}(\ms{prev.next})$ 		\Comment{next field is stored locally}
		\While{$\ms{curr.val} < v $} 	
					\Comment{the val field is stored locally}
                       \State $\ms{old} \gets \ms{prev.next}$
                       \Comment{pointer to the next element}
			\State $\ms{prev} \gets \ms{curr}$ 				\Comment{move on}
			\State $\ms{curr} \gets \lit{read}(\ms{curr.next})$ 	\Comment{fetch from memory}
		\EndWhile
		\If{$\ms{curr.val} = v$}							\Comment{val is stored locally}
			\State $\lit{write}(\ms{old},
                               [\ms{prev.val},\ms{curr.next}])$
		\EndIf
		\Return $(\ms{curr.val}=v)$ 						
		\EndReturn	
   	}\EndPart
	
	\Statex
	
	\Part{$\lit{contains}(v$)}{
		\State $\ms{prev} \gets \lit{read}(\ms{head})$  			\Comment{fetch the head node struct}
		\State $\ms{curr} \gets \lit{read}(\ms{prev.next})$ 		\Comment{next field is stored locally}
		\While{$\ms{curr.val} < v $} 						\Comment{the val field is stored locally}
			\State $\ms{prev} \gets \ms{curr}$ 				\Comment{move on}
			\State $\ms{curr} \gets \lit{read}(\ms{curr.next})$ 	\Comment{fetch from memory}
		\EndWhile
 	   	\Return $(\ms{curr.val}=v)	$						\Comment{val is stored locally}
 	   	\EndReturn
   	 }\EndPart

	}
	\end{multicols}
  \end{algorithmic}
\end{algorithm*}

The sequential implementation \LL~ of the \emph{set} type is
presented in Algorithm~\ref{alg:lists}. The implementation uses a \emph{sorted linked list} data structure 
in which each element (except the \emph{tail}) maintains a \textit{next} field to provide a pointer to the
successor node. Initially, the \emph{next} field of the \emph{head} element points to \emph{tail}; \emph{head}
(resp. \emph{tail}) is initialized with values $-\infty$ (resp. $+\infty$) that is smaller (resp. greater) than
the value of any other element in the list.
%

\section{\LS-linearizability is compositional}
\label{app:comp}
We define the composition of two distinct object types $\tau_1$ and $\tau_2$ 
as a type $\tau_1\times\tau_2=(\Phi,\Gamma,Q,q_0,\delta)$ as follows: 
$\Phi=\Phi_1\cup \Phi_2$, $\Gamma=\Gamma_1\cup 
\Gamma_2$,\footnote{Here we treat each $\tau_i$ as a distinct type by adding
index $i$ to all elements of $\Phi_i$, $\Gamma_i$, and $Q_i$.}   
$Q=Q_1\times Q_2$,
$q_0=({q_0}_1,{q_0}_2)$, and  $\delta \subseteq Q\times \Phi \times Q\times
\Gamma$ is such that $((q_1,q_2),\pi,(q_1'q_2'),r)\in\delta$ if and only if for $i\in \{1,2\}$, if 
$\pi\in \Phi_i$ then $(q_i,\pi,q_i',r)\in\delta_i$ $\wedge$ $q_{3-i}=q^{\prime}_{3-i}$.

Every sequential implementation $\id{IS}$ of an object  $O_1\times O_2$ of a
composed type $\tau_1\times\tau_2$ naturally induces two sequential
implementations $I_{S1}$ and $I_{S2}$ of objects $O_1$ and $O_2$,
respectively. 
Now a correctness criterion 
$\Psi$
is \emph{compositional} if for every
history $H$ on an object composition $O_1\times O_2$, 
if 
$\Psi$
holds for $H|O_i$ with
respect to $I_{Si}$, for $i \in \{1,2\}$, then
$\Psi$
holds for $H$ with
respect to $\id{IS}=I_{S1}\times I_{S2}$.
Here, $H|O_i$ denotes the subsequence of $H$ consisting of events on $O_i$.
\begin{theorem}
\label{th:comp}
\LS-linearizability is compositional. 
\end{theorem}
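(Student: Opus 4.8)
The plan is to prove compositionality by verifying the two components of \LS-linearizability separately: local serializability and linearizability. Since linearizability is already known to be compositional~\cite{HW90}, the high-level history part of the argument reduces to invoking that classical result: if $\tilde H | O_i$ is linearizable with respect to $\tau_i$ for $i \in \{1,2\}$, then $\tilde H$ is linearizable with respect to $\tau_1 \times \tau_2$. So the real work is in showing that local serializability composes, and here the key observation is that local serializability is a \emph{per-operation} property: it asserts something about each $H|\pi$ in isolation, with no global consistency requirement tying different operations together. This locality is exactly what should make the composition go through easily.

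**First I would** fix a history $H$ on the composed object $O_1 \times O_2$ such that $H|O_i$ is locally serializable with respect to $I_{Si}$ for each $i$. Take an arbitrary high-level operation $\pi$ appearing in $H$. By the definition of the composed type, $\pi \in \Phi_i$ for exactly one $i \in \{1,2\}$, and since $\pi$ operates only on component $O_i$, all of $\pi$'s reads and writes are on objects encoding the state of $O_i$; hence $(H|O_i)|\pi = H|\pi$. Local serializability of $H|O_i$ with respect to $I_{Si}$ gives us a sequential history $S \in \Sigma_{I_{Si}}$ with $(H|O_i)|\pi = S|\pi$, and thus $H|\pi = S|\pi$. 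It remains to exhibit a sequential history of the \emph{composed} implementation $\id{IS} = I_{S1} \times I_{S2}$ that agrees with $\pi$; but $S$ itself, viewed as a sequential execution in which only component $O_i$ is touched (component $O_{3-i}$ sitting in its initial state throughout), is a history of $\id{IS}$ by the way $\delta$ for $\tau_1\times\tau_2$ was defined. Therefore $H|\pi = S|\pi$ for some $S \in \Sigma_{\id{IS}}$, establishing that $H$ is locally serializable with respect to $\id{IS}$.

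**Combining** the two parts: $H$ is locally serializable with respect to $\id{IS}$ and $\tilde H$ is linearizable with respect to $\tau_1\times\tau_2$, so by Definition~\ref{def:lin}, $H$ is \LS-linearizable with respect to $(\id{IS}, \tau_1\times\tau_2)$. Since $H$ was arbitrary, \LS-linearizability is compositional.

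**The main obstacle** I anticipate is not conceptual but definitional bookkeeping: one must be careful that the sequential implementation $\id{IS}$ of the composed object genuinely decomposes into $I_{S1}$ and $I_{S2}$ in the way the theorem presupposes — that is, that each sequential operation of $\id{IS}$ reads and writes only the encoding objects of the single component it belongs to, and that a sequential execution of $\id{IS}$ restricted to operations on $O_i$ is a legal sequential execution of $I_{Si}$. This is where the assumption ``$\id{IS}$ induces $I_{S1}$ and $I_{S2}$'' from the composition setup must be used precisely; once that is pinned down, the argument above is essentially a re-indexing exercise. A secondary point to handle cleanly is the treatment of the last aborted read/write in the definition of $H|\pi$, but since aborts are per-operation events they restrict to $H|O_i$ without interaction, so this causes no trouble.
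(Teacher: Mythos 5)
Your proof is correct and follows essentially the same route as the paper's: the linearizability half is delegated to the classical compositionality of linearizability, and local serializability is established per operation by taking the witness from the relevant component implementation and lifting it to a sequential history of $\id{IS}=I_{S1}\times I_{S2}$. The only (presentational) difference is that the paper builds the witness $S_\pi$ by interleaving sequential histories of \emph{both} $I_{S1}$ and $I_{S2}$, which avoids having to assert explicitly that each operation's reads and writes touch only its own component's objects, whereas you embed the single component's witness with the other component left in its initial state --- the structural assumption you flag at the end is exactly what the paper's ``distinct components'' remark relies on as well.
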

\begin{proof}
Let $H$, a history on $O_1\times O_2$,  be \LS-linearizable
with respect to $\id{IS}$. 
Let each $H|O_i$, 
$i\in\{1,2\}$, 
be \LS-linearizable with respect to $I_{Si}$. 
Without loss of generality,  we assume that $H$ is complete (if $H$
is incomplete, we consider any completion of it containing
\LS-linearizable completions of  $H|O_1$ and $H|O_1$).

Let $\tilde H$ be a completion of the high-level history corresponding to $H$ such that
$\tilde H|O_1$ and $\tilde H|O_2$ are linearizable with respect to $\tau_1$
and $\tau_2$, respectively. Since linearizability is
compositional~\cite{HW90,HS08-book}, $\tilde H$ is linearizable with respect to $\tau_1\times\tau_2$.

Now let, for each operation $\pi$, $S_{\pi}^1$ and $S_{\pi}^2$ be any two sequential histories of
$I_{S1}$ and $I_{S2}$  such that $H|\pi|O_j=S_{\pi}^j|\pi$, for $j \in \{1,2\}$
(since 
$H|O_1$
and $H|O_2$ are \LS-linearizable such histories exist).
We construct a sequential history $S_{\pi}$ by interleaving events of
$S_{\pi}^1$ and $S_{\pi}^2$ so that $S_{\pi}|O_j=S_{\pi}^j$, 
$j\in\{1,2\}$.
Since each $S_{\pi}^j$ acts on a distinct component $O_j$ of $O_1\times
O_2$, every such $S_{\pi}$ is a sequential history of $\id{IS}$.
We pick one $S_{\pi}$ that respects the local history $H|\pi$,
which is possible, since $H|\pi$ is consistent with both    
$S_1|\pi$ and $S_2|\pi$. 

Thus, for each $\pi$, we obtain a history of $\id{IS}$ that agrees with
$H|\pi$. Moreover, the high-level history of $H$ is linearizable. Thus, $H$ is \LS-linearizable with respect to $\id{IS}$. 
\end{proof}


\section{Formal proofs}
\label{app:proofs}
\subsection{$\mathcal{SM}$ vs. $\mathcal{P}$}
\label{app:smp}
\paragraph{A pessimistic implementation $I^H\in \mathcal{P}$ of $(\LL,\ms{set})$.}
The implementation $I^H$ is a lock-based implementation that associates every object with a 
distinct \emph{lock} and another base object that stores the \emph{value} of the object.
In $I^H$, the $\lit{contains}$ operation uses shared \emph{hand-over-hand 
locking}~\cite{BS88,HS08-book}.
Each read operation performed by $\lit{contains}$ 
acquires the \emph{shared lock} on the object, 
reads the $\ms{next}$ field of the element before releasing the shared lock on the predecessor element. 
Update operations ($\lit{insert}$ and
$\lit{remove}$) acquire the \emph{exclusive lock} on the $\ms{head}$ during $\lit{read}(\ms{head})$ 
and release it at the end. Every other read operation performed 
by an update simply reads the $\ms{next}$ field of the element to traverse the list. The write operation
performed by a $\lit{insert}$ or $\lit{remove}$ acquires the exclusive lock, writes the value
to the element and releases the lock.
There is no real concurrency between any two update operations since the process holds the 
exclusive lock on the $\ms{head}$ throughout the operation execution.
Intuitively, it is easy to observe that $I^H$ is LS-linearizable with respect to $(\LL,\ms{set})$.

\begin{theorem}[Part 1 of  Theorem~\ref{th:mpl}]
There exists a schedule $\sigma_0$ of $(\LL,\ms{set})$ that is
accepted by $I^H\in \mathcal{PL}$, but not accepted by \emph{any}
$(\LL ,\ms{set})$-LSL implementation $I\in \mathcal{SM}$.
\end{theorem}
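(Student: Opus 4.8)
The plan is to exhibit the schedule $\sigma_0$ from Figure~\ref{fig:ex1} explicitly as a schedule and then argue on two fronts: that $I^H$ accepts it, and that no implementation in $\mathcal{SM}$ can. For the first front, I would describe a concrete execution of $I^H$ realizing $\sigma_0$: the operations $\lit{insert}(2)$, $\lit{insert}(5)$ and $\lit{contains}(5)$ traverse the list $\{1,3,4\}$, and the key observation is that the hand-over-hand shared locking used by $\lit{contains}$ only ever holds locks on \emph{two consecutive} elements at a time, while the write performed by an $\lit{insert}$ grabs an exclusive lock on a single element (plus the head lock held throughout). Since in $\sigma_0$ no read by $\lit{contains}(5)$ and no write by either $\lit{insert}$ ever touch the same element at overlapping times — the reads and writes are on disjoint or non-adjacent nodes at each instant — the lock requests never conflict, so deadlock-freedom of the locks guarantees every operation completes with a matching (non-$\bot$) response, and the resulting history exports exactly $\sigma_0$. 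I would also note $I^H$ is $(\LL,\ms{set})$-LSL, which is needed for the statement to make sense as an ``LSL implementation''.

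For the second front, I would argue by contradiction: suppose some $I\in\mathcal{SM}$ accepts $\sigma_0$, i.e., $I$ exports a history $H$ whose completion exhibits the order of $\sigma_0$. Since $I\in\mathcal{SM}$, $H$ arises from running each high-level operation inside a transaction of a safe-strict serializable TM, so in particular the committed transactions (here all three operations return matching responses, hence all commit) admit a legal sequential witness $\alpha'$ with $\rightarrow_\sigma\subseteq\rightarrow_{\alpha'}$. I would then replay the conflict argument already sketched in Section~\ref{sec:lin}: in $\sigma_0$, $\lit{insert}(5)$ reads the node $X_1$ \emph{after} $\lit{insert}(2)$ has written to $X_1$ (inserting $X_2$ between $X_1$ and $X_3$), forcing $\lit{insert}(2)\rightarrow_{\alpha'}\lit{insert}(5)$ in any legal serialization; but $\lit{contains}(5)$ reads a version of the list that does not contain $2$ yet must observe the node $X_5$ created by $\lit{insert}(5)$, forcing both $\lit{contains}(5)\rightarrow_{\alpha'}\lit{insert}(2)$ (to miss element $2$) and $\lit{insert}(5)\rightarrow_{\alpha'}\lit{contains}(5)$ (to see $X_5$). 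Chaining these three ordering constraints yields $\lit{insert}(2)\rightarrow_{\alpha'}\lit{insert}(5)\rightarrow_{\alpha'}\lit{contains}(5)\rightarrow_{\alpha'}\lit{insert}(2)$, a cycle, contradicting that $\alpha'$ is a sequential (hence acyclic) execution. Hence no $I\in\mathcal{SM}$ accepts $\sigma_0$.

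The main obstacle I anticipate is making the second argument fully rigorous at the level of the \emph{schedule} rather than a specific history: a schedule fixes only the interleaving of invocation/response events of reads, writes and high-level operations, not the read values, so I must show that \emph{every} history exporting $\sigma_0$ — for \emph{any} legal assignment of read values consistent with obedience and with the sequential code $\LL$ — necessarily induces the three conflict edges above. This requires a careful bookkeeping step: from the position of the read/write events in $\sigma_0$ I need to pin down which list nodes are read and written, argue (using the determinism of $\LL$ and obedience) that e.g. $\lit{insert}(5)$'s read of $X_1$ must return the post-$\lit{insert}(2)$ value because that write precedes the read in $\sigma_0$, and similarly constrain $\lit{contains}(5)$. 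Once the three edges are forced independently of value choices, the acyclicity contradiction with strict serializability (part of the definition of $\mathcal{SM}$) closes the proof. I would also remark briefly why safe-strict serializability — not merely strict serializability — is what guarantees LSL, though only the strict-serializability part of the committed transactions is actually used to derive the cycle here.
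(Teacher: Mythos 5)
Your proposal is correct and follows essentially the same route as the paper: the same schedule $\sigma_0$ from Figure~\ref{fig:ex1}, the same hand-over-hand argument that $I^H$ incurs no conflicting lock acquisitions and hence exports $\sigma_0$, and the same impossibility argument from strict serializability, which the paper organizes as a two-case analysis on the value $\lit{contains}(5)$ reads from $X_4$ (either the read forces $\lit{insert}(5)$ before $\lit{contains}(5)$, yielding your cycle, or $\lit{contains}(5)$ could never reach $X_5$, contradicting the schedule) --- exactly the ``bookkeeping'' step you flag as the remaining obstacle. The only minor difference is that the paper obtains the edge $\lit{insert}(2)\rightarrow\lit{insert}(5)$ directly from real-time precedence preserved by strict serializability, which is cleaner than justifying it via the read of $X_1$ returning the post-$\lit{insert}(2)$ value.
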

\begin{proof}
Let $\sigma_0$ be the schedule of $(\LL ,\ms{set})$ depicted in Figure~\ref{fig:ex1}.
Suppose by contradiction that $\sigma_0 \in\S(I)$, where $I$ is an implementation of $(\LL, \ms{set})$ based on any safe-strict serializable TM.
Thus, there exists an execution $\alpha$ of $I$ that exports $\sigma_0$.
Now consider two cases:
\begin{itemize}
\item
Suppose that the read of $X_4$ by $\lit{contains}(5)$ 
returns the value of $X_4$ that is updated by $\lit{insert}(5)$.
Since $\lit{insert}(2) \rightarrow_{\alpha} \lit{insert}(5)$, 
$\lit{insert}(2)$ must precede $\lit{insert}(5)$ in any sequential execution $\alpha'$ equivalent to $\alpha$. 
Also, since $\lit{contains}(5)$ reads $X_1$ prior to its update by $\lit{insert}(2)$, 
$\lit{contains}(5)$ must precede $\lit{insert}(2)$ in $\alpha'$. 
But then the read of $X_4$ is not legal in $\alpha'$---a contradiction since $\alpha$ must be serializable.
\item
Suppose that $\lit{contains}(5)$
reads the initial value of $X_4$, i.e., its value prior to the write to $X_4$ by $\lit{insert}(5)$, 
where $X_4.\ms{next}$ points to the
\emph{tail} of the list (according to our sequential implementation $\LL$).
But then, according to $\LL$, $\lit{contains}(5)$ cannot access
$X_5$ in $\sigma_0$---a contradiction.  
\end{itemize}
Consider the pessimistic implementation $I^H \in \mathcal{P}$:
since the $\lit{contains}$ operation traverses the list using shared hand-over-hand locking, 
the process $p_i$ executing $\lit{contains}(5)$
can release the lock on element $X_1$ prior to the acquisition of the exclusive lock on $X_1$ by $\lit{insert}(2)$.
Similarly, $p_i$ can acquire the shared lock on $X_4$ immediately after the release of the 
exclusive lock on $X_4$ by the process executing $\lit{insert}(5)$ while still holding 
the shared lock on element $X_3$. Thus, there exists an execution of $I^H$ that exports $\sigma_0$.
\end{proof}%
\paragraph{An optimistic implementation $I^C\in \mathcal{SM}$ of $(\LL,\ms{set})$.}
Recall that $\mathcal{SM}$ denotes the set of concurrent implementations based on TMs
that ensure the following safety condition in every execution.

Let $\alpha$ denote the execution of a TM-based implementation and
$\ms{ops}(\alpha)$, 
the set of operations each of which performs at least one event in $\alpha$.
Let ${\alpha}^k$ denote the prefix of $\alpha$ up to the last event of operation $\pi_k$.
Let $\ms{cseq}(\alpha)$ denote the subsequence of ${\alpha}$  that
consists of the events of the complete operations in $\alpha$. 
We say that $\alpha$ is \emph{strictly serializable} if 
there exists a legal sequential execution $\alpha'$ equivalent to
$\ms{cseq}(\alpha)$
such that $\rightarrow_{\ms{cseq}(\alpha)} \subseteq \rightarrow_{\alpha'}$. 

An execution $\alpha$ of a TM-based implementation is 
\emph{safe-strict serializable} if
(1) $\alpha$ is strictly serializable, and
(2) for each operation $\pi_k$ that is incomplete or returned $\bot$ in
$\alpha$, there exists a legal sequential execution of operations
$\alpha'=\pi_0\cdots \pi_i\cdot \pi_k$ such that $\{\pi_0,\cdots, \pi_i\}
\subseteq \ms{ops}(\ms{cseq}(\alpha^k))$ and $\forall \pi_m\in \ms{ops}(\alpha'):{\alpha'}|m={\alpha^k}|m$.

The implementation $I^C$ is
based on a TM that ensures the following condition in every execution~\cite{GK09-progressiveness}: 
if a transaction $T_i$ aborts, then it encounters a conflict with a transaction $T_j$, i.e.,  
$T_i$ and $T_j$ are concurrent, both access the same object $X$, and
at least one of these is a $\lit{write}$.
An implementation of such a safe-strict serializable TM can be
found in~\cite{KR11}. 
\begin{theorem}[Part 2 of Theorem~\ref{th:mpl}]
There exists a schedule $\sigma$ of $(\LL,\ms{set})$ 
that is accepted by an $(\LL ,\ms{set})$-LSL  implementation $I^C \in \mathcal{SM}$,
but not accepted by \emph{any} $(\LL ,\ms{set})$-LSL implementation in $\mathcal{P}$.
\end{theorem}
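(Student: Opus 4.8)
The plan is to exhibit a concrete schedule $\sigma$ of $(\LL,\ms{set})$ that a progressive TM-based implementation $I^C$ accepts, and then argue that every deadlock-free pessimistic LSL implementation must reject it. For the positive direction, I would take $\sigma$ to be the schedule in Figure~\ref{fig:ex2}(a): with initial state $\{1,2,3\}$, two concurrent update operations $\lit{insert}(1)$ (which will find $1$ present and do nothing but read) and some operation touching a disjoint part of the list, arranged so that the low-level reads and writes of the concurrent operations never conflict (no two concurrent transactions access a common object with at least one write). Since $I^C$ is based on a safe-strict serializable TM that aborts only on conflict, and $\sigma$ is conflict-free, every transaction commits, so $I^C$ exports a history realizing $\sigma$; by the earlier observation that safe-strict serializable TMs yield LSL implementations, $I^C \in \mathcal{SM}$ and accepts $\sigma$.

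For the negative direction, the key idea — already previewed in the main text — is an indistinguishability/extension argument. Suppose some $I \in \mathcal{P}$ accepts $\sigma$, i.e., there is an execution $\alpha$ of $I$ exporting $\sigma$. Because $I$ is pessimistic, no read or write in $\alpha$ returns $\bot$, and because $I$ is obedient, the values read along the traversal are determined by the last non-aborted writes. I would then construct a second schedule $\sigma'$ (Figure~\ref{fig:ex2}(b), initial state $\{3\}$) in which two concurrent inserts $\lit{insert}(1)$ and $\lit{insert}(2)$ each read $\ms{head}$, decide to splice in their new node, and write to $\ms{head}$; the crucial claim is that each process, acting only on the information locally available to it in $\sigma$ (the same sequence of reads/writes with the same return values), cannot distinguish $\sigma$ from $\sigma'$ and hence must take the same steps. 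Thus $I$ also produces an execution $\alpha'$ exporting $\sigma'$, in which the write to $\ms{head}$ by $\lit{insert}(2)$ overwrites that of $\lit{insert}(1)$ — a lost update.

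To finish, I would invoke deadlock-freedom: extend $\alpha'$ by running a $\lit{contains}(1)$ to completion with no further concurrency; since $I$ is deadlock-free and pessimistic, this operation must return a matching (non-$\bot$) response. Following the sequential code $\LL$ on the state left by $\alpha'$, the traversal from $\ms{head}$ reaches the node written last and misses the element inserted by $\lit{insert}(1)$, so $\lit{contains}(1)$ returns $\false$. But in the high-level history, both $\lit{insert}(1)$ and $\lit{insert}(2)$ returned $\true$ and precede $\lit{contains}(1)$, so any linearization must place $1$ in the set before $\lit{contains}(1)$ — contradicting linearizability with respect to $\ms{set}$. Hence no $I \in \mathcal{P}$ accepts $\sigma$.

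The main obstacle is the indistinguishability step: carefully pinning down why a pessimistic implementation accepting $\sigma$ is forced to mimic the same steps on $\sigma'$, i.e., that the two operations' local views coincide and that the obedience and pessimism assumptions together prevent $I$ from "peeking" at the conflicting concurrent write and reacting. This requires spelling out that the schedules agree on the order of all read/write invocations and responses, that the decision to write to $\ms{head}$ is part of the deterministic sequential code once the read values are fixed, and that no base-object steps can reveal the difference before the overwrite occurs — essentially a standard but delicate covering/indistinguishability argument of the kind used in asynchronous shared-memory lower bounds.
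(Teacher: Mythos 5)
Your proposal follows the same route as the paper's proof: the conflict-free schedule $\sigma$ of Figure~\ref{fig:ex2}(a) accepted by a progressive safe-strict serializable TM-based implementation $I^C$, a reduction showing that any pessimistic implementation accepting $\sigma$ must also produce the lost-update scenario $\sigma'$ of Figure~\ref{fig:ex2}(b), and a deadlock-freedom extension with a complete $\lit{contains}(1)$ returning $\false$ that contradicts linearizability of the high-level history.

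One point needs repair, and it is exactly the step you flag as delicate: your indistinguishability claim is overstated as written. It is not the case that the processes see ``the same sequence of reads/writes with the same return values'' in the two scenarios and hence take the same steps throughout; in $\alpha$ (state $\{1,2,3\}$) the reads of $X_1$ return $1$ and both inserts finish after reads only, whereas in $\alpha'$ (state $\{3\}$) those reads return $3$ and the sequential code forces each insert to write to $\ms{head}$, so an execution exporting $\sigma'$ cannot be obtained by pure mimicry of $\alpha$. The paper's argument uses indistinguishability only up to the point \emph{before} the reads of $X_1$ return (obedience guarantees that the read of $\ms{head}$ refers to $X_1$ in both), yielding the prefix $\alpha''$; beyond that point the writes to $\ms{head}$ are forced not by indistinguishability but by pessimism (no operation may return $\bot$), deadlock-freedom (both inserts must complete), and local serializability with respect to $\LL$ (on state $\{3\}$ each insert must complete its write to $\ms{head}$ before returning $\true$). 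Relatedly, in the positive direction the second concurrent operation in $\sigma$ is $\lit{insert}(2)$ itself, which reads the very same elements ($\ms{head}$, $X_1$) as $\lit{insert}(1)$ but performs no write---not an operation on a disjoint part of the list---since the reduction needs $\sigma$ and $\sigma'$ to involve the same two operations with matching step prefixes. With these corrections your argument coincides with the paper's.
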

\begin{proof}
We show first that the schedule $\sigma$ of $(\LL,\ms{set})$ depicted
in Figure~\ref{fig:ex2}(a) is not accepted by any implementation in $\mathcal{P}$.
Suppose the contrary and let $\sigma$ be exported by an execution $\alpha$. 
Here $\alpha$ starts with three sequential $\lit{insert}$ operations with
parameters $1$, $2$, and $3$. The resulting ``state'' of the set is
$\{1,2,3\}$, where value $i\in \{1,2,3\}$ is stored in object $X_i$.   

Suppose, by contradiction, that some $I\in \mathcal{P}$ accepts $\sigma$. 
We show that $I$ then accepts the schedule $\sigma'$ depicted in Figure~\ref{fig:ex2}(b), which starts with a sequential
execution of $\lit{insert}(3)$ storing value $3$ in object $X_1$. 

Let $\alpha'$ be any history of $I$ that exports $\sigma'$.
Recall that we only consider obedient implementations:
in $\alpha'$: the read of \emph{head} by $\lit{insert}(2)$ in $\sigma'$ refers to $X_1$ (the next element to be read by $\lit{insert}(2)$). 
In $\alpha$, element $X_1$ stores value $1$,
i.e., $\lit{insert}(1)$ can safely return $\lit{false}$, while in
$\sigma'$, $X_1$ stores value $3$, i.e., the next step of
$\lit{insert}(1)$ must be a write to \emph{head}.       
Thus, no process
can distinguish $\alpha$ and $\alpha'$ 
before the
read operations on $X_1$ return. 
Let $\alpha''$ be the prefix of $\alpha'$ ending with $R(X_1)$
executed by $\lit{insert}(2)$.
Since $I$ is deadlock-free, we have an extension of $\alpha''$ in
which both $\lit{insert}(1)$ and $\lit{insert}(2)$ terminate; we show that this extension violates linearizability. 
Since $I$ is locally-serializable, to respect our sequential implementation
of $(LL,\ms{set})$, both operations should complete the write to \emph{head}
before returning. 
Let $\pi_1=\lit{insert}(1)$ be the first operation to write to
\emph{head} in this extended execution. 
Let $\pi_2=\lit{insert}(2)$ be the other insert operation.
It is clear that $\pi_1$ returns $\true$ even though $\pi_2$ overwrites the update of $\pi_1$ on $\ms{head}$
and also returns $\true$. 
Recall that implementations in $\mathcal{P}$ are deadlock-free. 
Thus,   
we can further extend the execution with a complete $\lit{contains}(1)$
that will return $\false$ (the element inserted to the list by $\pi_1$
is lost)---a contradiction since $I$ is linearizable with respect to \emph{set}. 
Thus, $\sigma\notin\S(I)$ for any $I\in \mathcal{P}$.

On the other hand, the schedule $\sigma$ is accepted by $I^{C}\in \mathcal{SM}$, since
there is no conflict between the two concurrent update operations.
\end{proof}
%
%


\section{Relaxed optimistic implementation: proof of correctness}
\label{app:lltm}
Let $\alpha$ be an execution of $I^{RM}$ and $<_\alpha$
denote the total-order on events in $\alpha$.
For simplicity, we assume that $\alpha$ starts with an artificial
sequential execution of an insert operation $\pi_0$ that inserts $\ms{tail}$ and sets $\ms{head}.\ms{next}=\ms{tail}$. 
Let $H$ be the history exported by $\alpha$, where 
all reads and writes are sequential. 
We construct $H$ by associating a linearization point $\ell_{op}$ with each
non-aborted read or write operation $op$ performed in $\alpha$  
as follows:
\begin{itemize}
\item  if $op$ is a read, then performed by process $p_k$,
  $\ell_{op}$ is the base-object $\lit{read}$ in line~\ref{line:linr};
\item  if $op$ is a write within an $\lit{insert}$ operation,
  $\ell_{op}$ is the base-object $\lit{cas}$ in line~\ref{line:linw};
\item  if $op$ is a write  within a $\lit{remove}$ operation,
  $\ell_{op}$ is the base-object  $\lit{cas}$ in line~\ref{line:inswrite}.
\end{itemize}
We say that a $\lit{read}$ of an element $X$ within an operation $\pi$
is \emph{valid} in $H$ 
(we also say that $X$ is \emph{valid}) if 
there does not exist any $\lit{remove}$ operation $\pi_1$ that \emph{deallocates} $X$ (removes $X$ from the list)
such that $\ell_{\pi_1.\lit{write}(X)} <_{\alpha}
\ell_{\pi.\lit{read}(X)}$. 
\begin{lemma}
\label{lem:invar1}
Let $\pi$ be any operation performing $\lit{read}(X)$ followed by $\lit{read}(Y)$ in $H$. 
Then (1) there exists an $\lit{insert}$ operation that sets $X.\ms{next}=Y$ prior to $\pi.\lit{read}(X)$, and
(2) $\pi.\lit{read}(X)$ and $\pi.\lit{read}(Y)$ are \emph{valid} in $H$.
\end{lemma}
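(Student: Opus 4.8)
The plan is to proceed by induction on the length of the execution prefix of $\alpha$, strengthening the statement into an invariant that tracks the structure of the linked list as recorded by the sequence of committed $\lit{insert}$ and $\lit{remove}$ writes ordered by their linearization points. The key structural claim I would maintain is: at any point in $H$, the objects that are ``current'' (allocated, not yet deallocated, reachable from $\ms{head}$ via the $\ms{next}$ pointers as last written) form a sorted chain, and every $\ms{next}$ pointer that was ever written pointed — at the time of writing — to an object that was current then. This follows from the sequential specification $\LL$ embedded in the code: an $\lit{insert}(v)$ only writes $X.\ms{next}=Y$ after having read $X$ and $Y$ consecutively with matching versions, and a $\lit{remove}$ similarly redirects a predecessor's pointer past the removed node.

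For part~(1): suppose $\pi$ performs $\lit{read}(X)$ in line~\ref{line:linr} and subsequently $\lit{read}(Y)$. By the code, $\pi$ obtained the address $Y$ as the value $\ms{t-var}$ (really, the $\ms{next}$ field) read from $X$ at its $\lit{read}(X)$; so at time $\ell_{\pi.\lit{read}(X)}$ the object $X$ had $X.\ms{next}=Y$. Now $X$ is only ever written by $\lit{insert}$ or $\lit{remove}$ operations that acquire the versioned lock $L[\ell]$ via $\lit{cas}$ (lines~\ref{line:linw}, \ref{line:linw2}, \ref{line:inswrite}), and by construction these $\lit{cas}$ steps are exactly the linearization points of the corresponding writes; hence the write that established $X.\ms{next}=Y$ is linearized before $\ell_{\pi.\lit{read}(X)}$. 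It remains to argue this write belongs to an $\lit{insert}$, not a $\lit{remove}$: a $\lit{remove}(w)$ writing to $X$ would be redirecting $X.\ms{next}$ past a node storing $w$, so the resulting pointer target $Y$ would be a node that was $X$'s successor-of-successor — still a node previously linked in by some $\lit{insert}$. So either directly or by chasing back one more step we reach an $\lit{insert}$ that set $X.\ms{next}=Y$; I would fold this "the predecessor pointer always traces back to an insert" statement into the induction hypothesis to keep the argument clean.

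For part~(2): I need that neither $\lit{read}(X)$ nor $\lit{read}(Y)$ is invalidated by a concurrent deallocating $\lit{remove}$. Here I would use the $r[\ell]$ marking flag and the version check in line~\ref{line:flagcheck}. A $\lit{remove}$ that deallocates $Y$ sets $r[\ell']=\true$ (line~\ref{line:elastic:gc}) \emph{before} updating the predecessor's $\ms{t-var}$ and releasing the locks; crucially it holds the exclusive lock on $Y$ (obtained via $\lit{cas}$ in line~\ref{line:linw2}) across this whole interval, and that lock acquisition bumps the version of $Y$ upon release. So if $\pi.\lit{read}(Y)$ had $\ell_{\pi.\lit{read}(Y)} >_\alpha \ell_{\pi_1.\lit{write}(Y)}$ for the deallocating $\lit{remove}$ $\pi_1$, then when $\pi$ re-reads the versioned lock $L[\ell']$ in the line just after line~\ref{line:reread} it would observe either $\ms{ver}_1\neq\ms{ver}_2$ or $r[\ell']=\true$, and would abort in line~\ref{line:elastic:abort1} — contradicting that $\lit{read}(Y)$ is a non-aborted operation appearing in $H$. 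The same argument applies to $\lit{read}(X)$. The main obstacle I anticipate is the careful interleaving bookkeeping: showing that the window in which $r[\ell']$ is set and the version is stale strictly covers any read that is linearized after the removing $\lit{write}$, i.e., that there is no gap between "$\lit{cas}$ succeeds on $Y$" and "version of $Y$ visibly changed or $r[\ell']$ visibly set" during which a reader could sneak a consistent read through. This reduces to observing that a successful $\lit{cas}$ in line~\ref{line:linw2} sets the locked flag immediately, so any reader's second read of $L[\ell']$ in line after line~\ref{line:reread} that is ordered after that $\lit{cas}$ sees a changed lock word — either the locked bit (hence $\ms{ver}_1\neq\ms{ver}_2$ fails only if it already read the locked state first, but then it read a tuple with the flag and the check in line~\ref{line:flagcheck} still fires via $r$) or, after release, the incremented version. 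Spelling out this last case analysis is the crux of the proof.
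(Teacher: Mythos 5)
Your overall route is the paper's: assign linearization points at the $\ms{t\text{-}var}$ read and at the successful \lit{cas}, then use the deletion flag together with the double read of the versioned lock to argue that a read of a deallocated element cannot survive into $H$. However, the crux step of part~(2), as you close it, does not hold. The check in line~\ref{line:flagcheck} compares only the two \emph{version numbers} --- the locked bit is discarded by the wildcard in $\tup{\ms{ver},*}$ --- and the version of $Y$ is incremented only when the lock is released (line~\ref{line:release2}); moreover $r[\ell']$ is set only in line~\ref{line:elastic:gc}, i.e., strictly \emph{after} the \lit{cas} of line~\ref{line:linw2}. Hence in the interleaving where the reader obtains $\ms{ver}_1$ and reads $r[\ell']$ before the remover reaches line~\ref{line:elastic:gc}, while its second read of $L[\ell']$ falls inside the remover's critical section, it observes $\ms{ver}_1=\ms{ver}_2$ and $r=\false$ and does \emph{not} abort; your fallback ``then the check still fires via $r$'' is precisely what fails, since $r$ need not be set yet at the reader's line~\ref{line:reread}. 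The paper closes the case differently and more simply: ``$Y$ deallocated prior to $\pi.\lit{read}(Y)$'' is taken to mean that the remover's unlinking write (line~\ref{line:commit-update}) precedes the reader's linearization point (the $\ms{t\text{-}var}$ read of line~\ref{line:linr}); since line~\ref{line:elastic:gc} precedes line~\ref{line:commit-update} in the remover's program order and line~\ref{line:linr} precedes line~\ref{line:reread} in the reader's, transitivity gives that the reader reads $r[\ell']$ after the flag is set, so it returns $\bot$ and the read never appears in $H$. In the problematic window you identified the unlinking has not yet taken place, so no validity violation arises there at all; the lock/version machinery is not what carries this lemma.

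On part~(1), a smaller remark: when the pointer $X.\ms{next}=Y$ was last written by a \lit{remove}, your chase-back reaches an \lit{insert} that linked $Y$ into the list by setting some $W.\ms{next}=Y$ (or creating $Y$), not one that set $X.\ms{next}=Y$; so the invariant you fold into the induction should be phrased as ``every element read was previously created/linked by an \lit{insert} and is valid,'' which is exactly the form in which the lemma is invoked in the local-serializability proof. The paper's own argument is equally informal on this point (it treats the traversal ``inductively''), so I would count this as a wording issue rather than a gap; the real defect to repair is the case analysis in part~(2) described above.
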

\begin{proof} 
Let $\pi$ be any operation in $I^{RM}$ that performs $\lit{read}(X)$ followed by $\lit{read}(Y)$.
If $X$ and $Y$ are \emph{head} and \emph{tail} respectively, $\ms{head}.\ms{next}=\ms{tail}$ (by assumption). Since no $\lit{remove}$ operation
deallocates the \emph{head} or \emph{tail}, the $\lit{read}$ of $X$ and $Y$ are \emph{valid} in $H$.

Now, let $X$ be the \emph{head} element and suppose that $\pi$ performs $\lit{read}(X)$ followed by $\lit{read}(Y)$; $Y\neq \ms{tail}$ in $H$.
Clearly, if $\pi$ performs a $\lit{read}(Y)$, there exists an
operation $\pi'=\lit{insert}$ that has previously set $\emph{head}.\ms{next}=Y$.
More specifically, $\pi.\lit{read}(X)$ performs the action in line~\ref{line:linr} after the
write to shared memory by $\pi'$ in line~\ref{line:inscommit}. 
By the assignment of linearization points to tx-operations, $\ell_{\pi'} <_{\alpha} \ell_{\pi.\lit{read}(X)}$.
Thus, there exists an $\lit{insert}$ operation that sets $X.\ms{next}=Y$ prior to $\pi.\lit{read}(X)$ in $H$.

For the second claim, we need to prove that the $\lit{read}(Y)$ by $\pi$ is \emph{valid} in $H$.
Suppose by contradiction that $Y$ has been deallocated by some $\pi''=\lit{remove}$ operation prior to $\lit{read}(Y)$ by $\pi$.
By the rules for linearization of read and write operations, the action in line~\ref{line:commit-update} precedes the action in line~\ref{line:linr}.
However, $\pi$ proceeds to perform the check in line~\ref{line:flagcheck} and returns $\bot$ since the flag corresponding to the element $Y$ is previously set by $\pi''$. Thus, $H$ does not contain $\pi.\lit{read}(Y)$---contradiction.

Inductively, by the above arguments, every non-\emph{head} $\lit{read}$ by $\pi$ 
is performed on an element previously created by an $\lit{insert}$ operation
and is valid in $H$.
\end{proof} 
\begin{lemma}
\label{lem:rls}
$H$ is locally serializable with respect to $\LL$.
\end{lemma}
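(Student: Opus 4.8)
The plan is to show that for every high-level operation $\pi$ appearing in $H$, the local history $H|\pi$ is a history of the sequential implementation $\LL$. Recall from Definition~\ref{def:lin} that $H|\pi$ consists of all read/write events of $\pi$ except a possibly-final aborted read or write; so an operation that returns $\bot$ contributes only its successful prefix, and an operation that returns a matching response contributes all its events. Since $\LL$ is deterministic and every high-level operation of $\LL$ simply traverses the sorted list from $\ms{head}$ following $\ms{next}$ pointers (and, for $\lit{insert}$/$\lit{remove}$, performs one write), it suffices to exhibit, for each $\pi$, a legal sequential execution of $\LL$ whose restriction to $\pi$ equals $H|\pi$ — concretely, a reachable list state in which the sequence of reads performed by $\pi$ is exactly what $\pi$ read in $H$.

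First I would invoke Lemma~\ref{lem:invar1}: it guarantees that the elements read by $\pi$, in the order $\pi$ reads them, form a valid chain $\ms{head}=X_0, X_1, \dots$ where each $X_{i+1}$ was installed as $X_i.\ms{next}$ by some $\lit{insert}$ before $\pi$ read $X_i$, and none of these elements had been deallocated (removed) before $\pi$ read it. This is precisely the structural invariant needed to claim that the values $\pi$ observed could have been produced by a genuine sequential traversal. Then I would construct, from the writes linearized before the relevant point of $\pi$, a sequential execution of $\LL$ — a sequence of completed $\lit{insert}$/$\lit{remove}$ operations — that builds a sorted linked list containing exactly the chain $\pi$ traversed, with the $\ms{val}$ and $\ms{next}$ fields matching the values $\pi$ read via $\ms{t-var}$ and via the successor pointers. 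The key point is that in $I^{RM}$ a read of $X_\ell$ returns the value in $\ms{t-var}[\ell]$ at its linearization point (line~\ref{line:linr}), and the version-recheck plus the $r[\ell]$ flag check in line~\ref{line:flagcheck} ensure that no conflicting write or deletion "crossed" $\pi$'s read — so the observed value is a value that genuinely sat in memory as part of a consistent list state. Appending $\pi$'s own read/write events after this prefix yields a legal sequential execution of $\LL$ whose projection on $\pi$ is $H|\pi$; if $\pi$ aborted, we simply stop $\pi$ right before its aborted step, which is legitimate because $H|\pi$ excludes that step and any proper prefix of a sequential traversal is itself a legal partial sequential execution.

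I would organize the argument by the type of $\pi$. For $\lit{contains}(v)$: by Lemma~\ref{lem:invar1} the reads of $\pi$ follow a valid chain of elements sorted by value; the traversal stops at the first $X$ with $X.\ms{val} \ge v$ (or returns $\bot$ before that), and this is exactly what the sequential $\lit{contains}(v)$ of $\LL$ does on the corresponding list. For $\lit{insert}(v)$ and $\lit{remove}(v)$: the traversal part is handled identically; for the write, one must additionally check that the $\lit{cas}$ succeeding (lines~\ref{line:linw}, \ref{line:linw2}, \ref{line:inswrite}) means the version of the target element had not changed since $\pi$ read it, so the write is applied to the same element and neighbourhood $\pi$ saw during traversal — matching the sequential $\LL$, which writes to (or unlinks) the element it just located. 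In all cases the reads that $\pi$ contributes to $H|\pi$ return legal values with respect to the constructed sequential execution, so $H|\pi \in \Sigma_{\LL}$, which is what local serializability requires.

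The main obstacle I anticipate is the construction of the witnessing sequential execution for the write operations — specifically, arguing that the set of linearized writes preceding $\pi$'s write can be replayed as a well-formed sequence of complete $\LL$ operations that produces a single coherent sorted list containing $\pi$'s traversed chain with the right field values. Lemma~\ref{lem:invar1} gives the chain-structure and validity locally around $\pi$, but one still has to be careful that: (i) the version/flag checks really do rule out a "lost" intermediate state — e.g. that between $\pi$ reading $X_\ell$ and $\pi$ reading $X_{\ell'}=X_\ell.\ms{next}$, no interleaved $\lit{remove}$ both succeeded and unlinked $X_\ell$ in a way $\pi$ wouldn't detect; and (ii) for the $\lit{remove}$ case, the two $\lit{cas}$'s on $X_\ell$ and $X_{\ell'}$ together guarantee the predecessor-successor relationship still holds at write time. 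I expect these to follow from the obedience assumption, the linearization-point assignment, and Lemma~\ref{lem:invar1}, but spelling out that a globally consistent sequential prefix exists (rather than just local consistency) is the delicate part; fortunately, local serializability only demands a per-operation witness, so I can tailor the sequential execution separately to each $\pi$ rather than needing one global serialization.
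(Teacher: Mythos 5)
Your overall plan is the paper's plan: use Lemma~\ref{lem:invar1} to get that the chain of elements read by $\pi$ is a valid, sorted chain of previously inserted elements, and then exhibit, for each $\pi$ separately, a sequential execution of $\LL$ whose projection on $\pi$ equals $H|\pi$. The difference is in how the witness is built, and this is exactly where your proposal stops short. You propose to obtain the witness by replaying ``the writes linearized before the relevant point of $\pi$'' as a well-formed sequence of complete $\lit{insert}$/$\lit{remove}$ operations producing one coherent list state, and you yourself flag this as the delicate, unresolved step: it would require a global invariant that the concurrent execution's linearized write prefix corresponds to a reachable sequential list state, which is strictly more than local serializability asks for and is not established by Lemma~\ref{lem:invar1} alone.

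The paper avoids this entirely. Since local serializability only requires \emph{some} $S^{\pi}\in\Sigma_{\ms{LL}}$ with $H|\pi=S^{\pi}|\pi$, the witness need not bear any relation to the writes that actually occurred: one simply takes $S^{\pi}$ to be a fresh sequence of $\lit{insert}$ operations that insert, one by one, exactly the values read by $\pi$ (Lemma~\ref{lem:invar1} and the $v''>v'$ ordering of successive reads guarantee these values form an ascending chain reachable from $\ms{head}$, and that the traversal terminates), followed by $\pi$ itself; by construction $S^{\pi}\in\Sigma_{\ms{LL}}$ and its restriction to $\pi$ reproduces $\pi$'s reads (and, for an update, its write on the located element), with an aborted $\pi$ handled exactly as you say, by dropping the final aborted step. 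So your proof is recoverable, but as written it aims at a ``globally consistent sequential prefix'' that is neither needed nor proved; replacing that construction with the insert-only per-operation witness closes the gap and is the essential idea of the paper's argument.
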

\begin{proof}
By Lemma~\ref{lem:invar1}, every element $X$ read 
within an operation $\pi$ 
is previously created by an $\lit{insert}$ operation and is valid in $H$.
Moreover, if the read operation on $X$ returns $v'$, then 
$X.\textit{next}$ stores a pointer to another valid element that
stores an integer value $v''>v'$.
Note that the series of reads performed by $\pi$ terminates as soon as 
an element storing value $v$ or higher is found. Thus, $\pi$ performs at most
$O(|v-v_0|)$ reads, where $v_0$ is the value of the second element read by $\pi$.  
Now we construct $S^{\pi}$ as a sequence of $\lit{insert}$ operations,
that insert values read by $\pi$, one by one, followed by $\pi$. 
By construction, $S^{\pi}\in\Sigma_{\ms{LL}}$.
\end{proof}
%
%
It is sufficient for us to prove that every finite high-level history $H$
of $I^{RM}$ is linearizable.
%
First, we obtain a completion $\tilde H$ of $H$ as follows.
The invocation of an incomplete \lit{contains} operation is discarded.
The invocation of an incomplete $\pi=\lit{insert} \vee \lit{remove}$
operation that has not returned successfully from the $\lit{write}$
operation is discarded; otherwise, it is completed with response $\true$.

We obtain a sequential high-level history $\tilde S$ equivalent to $\tilde H$ by associating a linearization point $\ell_{\pi}$ 
with each operation $\pi$ as follows.
For each $\pi=\lit{insert}\vee\lit{remove}$ that returns $\true$ in $\tilde H$, $\ell_{\pi}$ is associated with 
the first $\lit{write}$ performed by $\pi$ in $H$; otherwise
$\ell_{\pi}$ is associated with the last $\lit{read}$ performed by $\pi$ in $H$. For $\pi=\lit{contains}$ that returns $\true$, $\ell_{\pi}$
is associated with the last $\lit{read}$ performed in $I^{RM}$; otherwise $\ell_{\pi}$ is associated with the $\lit{read}$ of \emph{head}.
Since linearization points are chosen within the intervals of 
operations of $I^{RM}$, for any two operations
$\pi_i$ and $\pi_j$ in ${\tilde H}$, if $\pi_i \rightarrow_{\tilde H}
\pi_j$, then $\pi_i \rightarrow_{\tilde S} \pi_j$.
\begin{lemma}
\label{lem:rlegal}
$\tilde S$ is consistent with the sequential specification of type \textit{set}.
\end{lemma}
\begin{proof}
Let ${\tilde S}^k$ be the prefix of $\tilde S$ consisting of
the first $k$ complete operations. 
We associate each ${\tilde S}^k$ with a set $q^k$ of objects that were
successfully inserted and not subsequently successfully removed in ${\tilde S}^k$.
We show by induction on $k$ that the sequence of state transitions in
${\tilde S}^k$ is consistent with operations' responses in ${\tilde
  S}^k$ with respect to the \textit{set} type. 

The base case $k=1$ is trivial: the \textit{tail} element containing
$+\infty$ is successfully inserted.
Suppose that ${\tilde S}^k$ is consistent with the \emph{set} type and
let $\pi_1$ with argument $v\in \mathbb{Z}$ and response $r_{\pi_{1}}$
be the last operation of ${\tilde S}^{k+1}$.  
We want to show that $(q^k,\pi_1,q^{k+1},r_{\pi_{1}})$ is consistent with the \textit{set} type. 
%
%
\begin{enumerate}
\item[(1)]
If $\pi_1=\lit{insert}(v)$ returns $\true$ in ${\tilde S}^{k+1}$, there does not exist any other $\pi_2=\lit{insert}(v)$ that returns $\true$ in ${\tilde S}^{k+1}$ such that there does not exist any $\lit{remove}(v)$ that returns $\true$; $\pi_2 \rightarrow_{{\tilde S}^{k+1}} \lit{remove}(v) \rightarrow_{{\tilde S}^{k+1}} \pi_1$.
Suppose by contradiction that such a $\pi_1$ and $\pi_2$ exist.
Every successful $\lit{insert}(v)$ operation performs its penultimate $\lit{read}$ on an element $X$ that stores a value $v'<v$ and the last read is performed on an element that stores a value $v''>v$. Clearly, $\pi_1$ also performs a $\lit{write}$ on $X$.
By construction of $\tilde S$, $\pi_1$ is linearized at the release of the \emph{cas} lock on element $X$.
Observe that $\pi_2$ must also perform a $\lit{write}$ to the element $X$ (otherwise one of $\pi_1$ or $\pi_2$ would return $\false$).
By assumption, the write to $X$ in shared-memory by $\pi_2$
(line~\ref{line:inscommit}) precedes the corresponding write to $X$ in
shared-memory by $\pi_2$. If $\ell_{\pi_2} <_{\alpha}
\ell_{\pi_{1}.\lit{read}(X)}$, then $\pi_1$ cannot return $\true$---a contradiction.
Otherwise, if $\ell_{\pi_{1}.\lit{read}(X)}  <_{\alpha} \ell_{\pi_2}$,
then $\pi_1$ reaches line~\ref{line:linw} and return
$\bot$. This is because either $\pi_1$ attempts to acquire the
\emph{cas} lock on $X$ while it is still held by $\pi_2$ or the value
of $X$ contained in the \emph{rbuf} of the process executing $\pi_1$
has changed---a contradiction. 

If $\pi_1=\lit{insert}(v)$ returns $\false$ in ${\tilde S}^{k+1}$, there exists a $\pi_2=\lit{insert}(v)$ that returns $\true$ in ${\tilde S}^{k+1}$ such that there does not exist any $\pi_3=\lit{remove}(v)$ that returns $\true$; $\pi_2 \rightarrow_{{\tilde S}^{k+1}} \pi_3 \rightarrow_{{\tilde S}^{k+1}} \pi_1$. 
Suppose that such a $\pi_2$ does not exist. Thus, $\pi_1$ must perform
its last $\lit{read}$ on an element that stores value $v''>v$, perform
the action in Line~\ref{line:inscommit} and return $\true$---a contradiction.

It is easy to verify that the conjunction of the above two claims prove that $\forall q\in Q$; $\forall v\in \mathbb{Z}$, ${\tilde S}^{k+1}$ satisfies $(q,\lit{insert}(v),q \cup \{v\},(v \not\in q))$.
\item[(2)]
If $\pi_1=\lit{remove}(v)$, similar arguments as applied to $\lit{insert}(v)$ prove that $\forall q\in Q$; $\forall v\in \mathbb{Z}$, ${\tilde S}^{k+1}$ satisfies $(q,\lit{remove}(v),q \setminus \{v\},(v \in q))$.

\item[(3)]
If $\pi_1=\lit{contains}(v)$ returns $\true$ in ${\tilde S}^{k+1}$, there exists $\pi_2=\lit{insert}(v)$ that returns \emph{true} in ${\tilde S}^{k+1}$ such that there does not exist any $\lit{remove}(v)$ that returns \emph{true} in ${\tilde S}^{k+1}$ such that $\pi_2 \rightarrow_{{\tilde S}^{k+1}} \lit{remove}(v) \rightarrow_{{\tilde S}^{k+1}} \pi_1$.
The proof of this claim immediately follows from Lemma~\ref{lem:invar1}.

Now, if $\pi_1=\lit{contains}(v)$ returns $\false$ in ${\tilde S}^{k+1}$, 
there does not exist an $\pi_2=\lit{insert}(v)$ that returns \emph{true} such that 
there does not exist any $\lit{remove}(v)$ that returns \emph{true}; 
$\pi_2 \rightarrow_{{\tilde S}^{k+1}} \lit{remove}(v) \rightarrow_{{\tilde S}^{k+1}} \lit{contains}(v)$.
Suppose by contradiction that such a $\pi_1$ and $\pi_2$ exist. 
Thus, the action in line~\ref{line:inscommit} by the $\lit{insert}(v)$ operation that updates some element, 
say $X$ precedes the action in line~\ref{line:linr} by $\lit{contains}(v)$ that is associated with its 
first $\lit{read}$ (the \emph{head}).
We claim that $\lit{contains}(v)$ must read the element $X'$ newly created by
$\lit{insert}(v)$ and return $\true$---a contradiction to the initial assumption that it returns $\false$.
The only case when this can happen is if there exists a $\lit{remove}$
operation that forces $X'$ to be unreachable from \emph{head} i.e. concurrent to the $\lit{write}$
to $X$ by $\lit{insert}$, there exists a $\lit{remove}$ that sets $X''.\textit{next}$ to $X.\textit{next}$
after the action in line~\ref{line:inswrite} by $\lit{insert}$.
But this is not possible since the \emph{cas} on $X$ performed by the $\lit{remove}$
would return $\false$.
\end{enumerate}
Thus, inductively, the sequence of state transitions in ${\tilde S}$
satisfies the sequential specification of the \textit{set} type. 
\end{proof}
Lemmas~\ref{lem:rls} and~\ref{lem:rlegal} imply:
\begin{theorem}
\label{th:lr}
$I^{RM}$ is \LS-linearizable with respect to $(\LL,\ms{set})$.
\end{theorem}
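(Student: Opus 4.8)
The plan is to assemble the two conditions of Definition~\ref{def:lin} directly from the lemmas already proved for $I^{RM}$. Fix an arbitrary finite execution $\alpha$ of $I^{RM}$ and let $H$ be the history it exports, where reads and writes are sequentialized at the linearization points $\ell_{op}$ fixed earlier (the base-object $\lit{read}$ of line~\ref{line:linr} for a read, the base-object $\lit{cas}$ of line~\ref{line:linw} or line~\ref{line:inswrite} for a write). For condition~(1), I would invoke Lemma~\ref{lem:rls} verbatim: for every high-level operation $\pi$ in $H$ it produces a sequence $S^{\pi}$ consisting of $\lit{insert}$ operations inserting the values read by $\pi$, one by one, followed by $\pi$ itself, with $S^{\pi}\in\Sigma_{\LL}$ and $H|\pi=S^{\pi}|\pi$; hence $H$ is locally serializable with respect to $\LL$.

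For condition~(2), I would first complete the high-level history $\tilde H$ of $H$ exactly as indicated before Lemma~\ref{lem:rlegal}: discard the invocation of each incomplete $\lit{contains}$; discard an incomplete $\lit{insert}$ or $\lit{remove}$ that has not returned successfully from its $\lit{write}$, and otherwise complete it with response $\true$. I would then linearize the resulting complete history by the points $\ell_{\pi}$ defined there — the first $\lit{write}$ of a successful update, the last $\lit{read}$ of an unsuccessful update or of a successful $\lit{contains}$, and the $\lit{read}$ of \emph{head} for a $\lit{contains}$ returning $\false$ — obtaining a sequential high-level history $\tilde S$ equivalent to $\tilde H$. Since every $\ell_{\pi}$ lies inside the interval of $\pi$, $\pi_i\rightarrow_{\tilde H}\pi_j$ implies $\pi_i\rightarrow_{\tilde S}\pi_j$, i.e.\ $\rightarrow_{\tilde H}\subseteq\rightarrow_{\tilde S}$. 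By Lemma~\ref{lem:rlegal}, $\tilde S$ is consistent with the sequential specification of the \textit{set} type, so $\tilde H$ is linearizable with respect to \textit{set}. Combining the two parts, $H$ is $(\LL,\ms{set})$-LSL, and since $\alpha$ was an arbitrary execution of $I^{RM}$, every exported history is $(\LL,\ms{set})$-LSL, which is exactly the claim.

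The main obstacle is not in this final assembly, which is essentially bookkeeping about real-time order and completions, but is discharged inside Lemmas~\ref{lem:invar1}, \ref{lem:rls} and~\ref{lem:rlegal}: the delicate point is that the version-number/lock discipline of Algorithm~\ref{alg:elastic} guarantees that every element a successful operation reads is a valid (not-yet-deallocated) node created by a genuine $\lit{insert}$ whose $\ms{next}$ chain is correctly ordered, and that two successful inserts of the same value cannot both be linearized because the conflict detection at line~\ref{line:linw} / line~\ref{line:inswrite} forces one of them to return $\bot$. If those had to be reproved here, the crux would be the case analysis of Lemma~\ref{lem:rlegal} relating the $<_\alpha$-order of the chosen linearization points to the success or failure of the $\lit{cas}$ operations on the relevant list elements.
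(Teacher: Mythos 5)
Your proposal is correct and follows essentially the same route as the paper: the paper's proof of this theorem is precisely the assembly you describe, namely local serializability from Lemma~\ref{lem:rls} and linearizability of the completed high-level history via the stated linearization points (which lie within operation intervals, giving $\rightarrow_{\tilde H}\subseteq\rightarrow_{\tilde S}$) together with Lemma~\ref{lem:rlegal}. Your closing remark correctly locates the real technical content in Lemmas~\ref{lem:invar1}, \ref{lem:rls} and~\ref{lem:rlegal}, exactly as in the paper.
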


\section{Complementary experiments}\label{sec:expe}
To confirm the practicality of our highly concurrent optimistic list-based set algorithm we compared its performance against the state-of-the-heart list-based set synchronized with hand-over-hand locking (or lock coupling).
To this end, we implemented the pseudocode of Algorithm~\ref{alg:elastic} in Java without further optimizations and compared it against the Java code from Herlihy and 
Shavit of the hand-over-hand lock-based linked list~\cite{HS08-book}. 

Figure~\ref{fig:alg2-vs-hoh} gives the throughput as the number of operations per millisecond by having from 1 to 64 threads running between 5\% and 
20\% of updates (either remove or insert with same probability) and the rest of contains operations. The list is initially populated with 512 values that are integers taken from 0 to 1024 with uniform distribution. The machine has 
2 8-core Intel Xeon E5-2450 running at 2.1GHz (32-way as each code is hyperthreaded).
Java is 1.7.0\_55 and the JVM is the OpenJDK 64-Bit Server VM. Each point of the graph results from the average of 10 runs of 5 seconds plus 5 seconds to warmup the JVM.
\begin{figure}
  \begin{center}
    \includegraphics[scale=0.9]{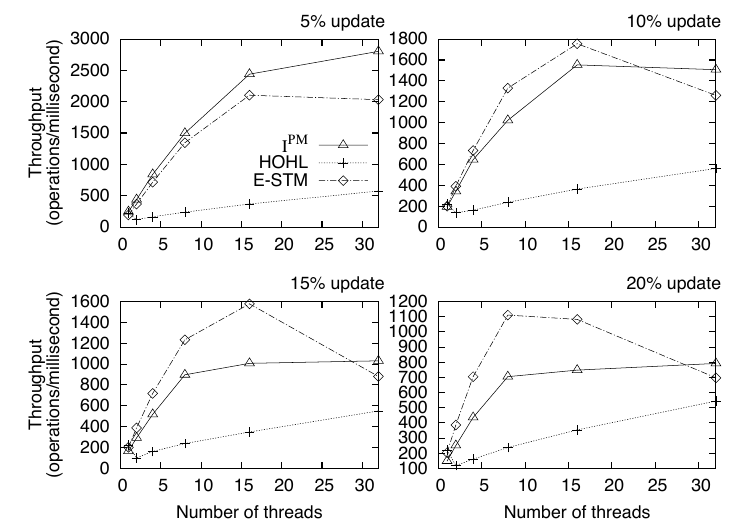}
    \caption{Performance of the list-based set synchronized with our Algorithm~\ref{alg:elastic} ($I^{PM}$), with hand-over-hand locking (HOHL) and with elastic transactions (E-STM)\label{fig:alg2-vs-hoh}}
  \end{center}
\end{figure}

We can observe that our optimistic algorithm ($I^{RM}$) outperforms, in most cases, the list based set synchronized with hand-over-hand locking (HOHL). 
This performance is due to the optimal concurrency of our algorithm (all correct schedules are accepted as shown by Theorem~\ref{th:lrelaxed}) and the low overhead of our algorithm (a valid schedule is efficiently identified and a constant number of $\lit{cas}$ are needed, only during update operations).
We also observe that the peak performance of E-STM is better than the peak performance of $I^{RM}$ after 10\% updates.
This is due to the fact that our code, kept intentionally simple, 
is not optimized with partial aborts, hence it always restarts from the beginning of the list upon abort. By contrast, E-STM is optimized to re-read 
only one node upon some conflict detection~\cite{FGG09}.   The optimal concurrency of our algorithm makes it scale despite contention whereas E-STM does not scale even starting at 5\% udpates.
%
Worth noting is that there is a large body of work on concurrent list-based set algorithms, and we are not claiming our algorithm to be the most efficient. 
Algorithms that are not optimal with respect to concurrency can achieve better results on some workload with a lower overhead. 
An interesting question is how far can our implementation be optimized. In particular, we know that partial aborts and wrappers inlinining could 
boost the performance of our algorithm while retaining its concurrency optimality.

\end{document}